\newcounter{thm}
\newtheorem{theorem}[thm]{Theorem}
\newtheorem{lemma}[thm]{Lemma}
\newtheorem*{lemma*}{Lemma}
\newtheorem{claim}[thm]{Claim}
\newtheorem{proposition}[thm]{Proposition}
\theoremstyle{definition}
\newtheorem{definition}{Definition}
\newtheorem*{remark*}{Remark}
\newcommand{\cN}{\mathcal N}
\renewcommand{\R}{\mathbb R} %
\newcommand{\ip}[1]{\left\langle#1\right\rangle}
\newcommand{\trans}{^{\intercal}}
\newcommand{\bx}{\bm x}
\newcommand{\bv}{\bm v}
\newcommand{\bs}{\bm s}
\newcommand{\bt}{\bm t}
\newcommand{\by}{\bm y}
\newcommand{\be}{\bm e}
\newcommand{\bu}{\bm u}
\newcommand{\bg}{\bm g}
\newcommand{\eps}{\epsilon}
\newcommand{\supp}{\operatorname{supp}}
\newcommand{\spa}{\operatorname{sparse}}
\newcommand{\den}{\operatorname{dense}}
\title{Smoothed Complexity of 2-player Nash Equilibria}
\author{Shant Boodaghians\thanks{University of Illinois at Urbana-Champaign. Supported by NSF grant CCF-1750436} \and Joshua Brakensiek\thanks{Stanford University. Supported by an NSF Graduate Research Fellowship.} \and Samuel B. Hopkins\thanks{University of California, Berkeley. Supported by a Miller Postdoctoral Fellowship.} \and Aviad Rubinstein\thanks{Stanford University}}
\date{}
\begin{document}
\maketitle

\begin{abstract}
We prove that computing a Nash equilibrium of a two-player ($n \times n$) game with payoffs in $[-1,1]$ is \PPAD-hard (under randomized reductions) even in the smoothed analysis setting, smoothing with noise of \emph{constant} magnitude.
This gives a strong negative answer to conjectures of Spielman and Teng [ST06] and Cheng, Deng, and Teng [CDT09].

In contrast to prior work proving \PPAD-hardness after smoothing by noise of magnitude $1/\poly(n)$ [CDT09], our smoothed complexity result is \emph{not} proved via hardness of approximation for Nash equilibria.
This is by necessity, since Nash equilibria can be approximated to constant error in quasi-polynomial time [LMM03].
Our results therefore separate smoothed complexity and hardness of approximation for Nash equilibria in two-player games.

The key ingredient in our reduction is the use of a \emph{random zero-sum game} as a gadget to produce two-player games which remain hard even after smoothing.
Our analysis crucially shows that all Nash equilibria of random zero-sum games are far from pure (with high probability), and that this remains true even after smoothing.
\end{abstract}
\vfill

\begin{figure}
  \centering
	\def\r{1.5}
\def\w{0.03}
\def\highlight{blue!55!green!45!white}
	\begin{tikzpicture}
	   
	\begin{scope}[shift={(-6*\r,0)}]
	\node at (-1*\r, \r) {(a)};
    \draw[thick, fill=\highlight] (0,0) circle (\r);
    
    \begin{scope}[scale=\r]
    \edef\points{}
    \foreach \point [count=\i] in {
        (-0.0379132031724, -0.139664730507),
        (-0.528281338649, -0.124334980586),
        (-0.1565313227, -0.544895859619),
        (0.58679649386, -0.678806078824),
        (0.0650917380462, -0.59290965366),
        (-0.560797525839, -0.732710434918),
        (-0.154434040702, 0.383050907908),
        (0.762199229989, -0.32436671741),
        (0.325079323179, 0.225073809187),
        (0.087445083514, 0.354871059114),
        (-0.345868342326, 0.0238181286022),
        (-0.325233577705, 0.467097811294)}
    {
        \def\this{point-\i}
        \node[coordinate] (\this) at \point {} ;
        \fill[color=white] (\this) circle (\w) ;
        \xdef\points{(\this) \points}
    }
    \end{scope}
    
	\draw (-2*\w,-2*\w)--(2*\w, 2*\w);
	\draw (-2*\w,2*\w)--(2*\w, -2*\w);
	\draw[thin, <->] (-1*\r,0.4*\r) -- (0,0.4*\r) node [midway,above] {$\sigma$};
	\draw[thin] (-1*\r,0)--(-1*\r,0.5*\r) (0,0)--(0,0.5*\r);
	\node at (0,0) [below right] {$x$};
    \end{scope}

    \begin{scope}[shift={(-3*\r,0)}]
    \node at (-1*\r, \r) {(b)};
    \draw[thick] (0,0) circle (\r);
    
    \begin{scope}[scale=\r]
    \edef\points{}
    \foreach \point [count=\i] in {
        (-0.0379132031724, -0.139664730507),
        (-0.528281338649, -0.124334980586),
        (-0.1565313227, -0.544895859619),
        (0.58679649386, -0.678806078824),
        (0.0650917380462, -0.59290965366),
        (-0.560797525839, -0.732710434918),
        (-0.154434040702, 0.383050907908),
        (0.762199229989, -0.32436671741),
        (0.325079323179, 0.225073809187),
        (0.087445083514, 0.354871059114),
        (-0.345868342326, 0.0238181286022),
        (-0.325233577705, 0.467097811294)}
    {
        \def\this{point-\i}
        \node[coordinate] (\this) at \point {} ;
        \fill[color=\highlight] (\this) circle (\w) ;
        \xdef\points{(\this) \points}
    }
    \end{scope}
    
	\draw (-2*\w,-2*\w)--(2*\w, 2*\w);
	\draw (-2*\w,2*\w)--(2*\w, -2*\w);
	\draw[thin, <->] (-1*\r,0.4*\r) -- (0,0.4*\r) node [midway,above] {$\sigma$};
	\draw[thin] (-1*\r,0)--(-1*\r,0.5*\r) (0,0)--(0,0.5*\r);
	\node at (0,0) [below right] {$x$};
    \end{scope}

    \node at (-1*\r, \r) {(c)};
	\draw[thick, fill=white] (0,0) circle (\r);
	\draw (-2*\w,-2*\w)--(2*\w, 2*\w);
	\draw (-2*\w,2*\w)--(2*\w, -2*\w);
	\draw[thin] (-1*\r,0)--(-1*\r,0.5*\r) (0,0)--(0,0.5*\r);
	\draw[thin, <->] (-1*\r,0.4*\r) -- (0,0.4*\r) node [midway,above] {$\sigma$};
	\node at (0,0) [below right] {$x$};

	\end{tikzpicture}
   \caption{The $\sigma$-neighborhood of an intractable instance $x$. 
   Tractable instances are colored. 
   (1-a) {\bf smoothed-algorithmica}: almost all instances in the $\sigma$-neighborhood are tractable, (1-b) {\bf smoothed-complexity}: very few instances in the $\sigma$-neighborhood are tractable; (1-c) {\bf hardness of approximation}: it is intractable to find a solution to any instance in the $\sigma$-neighborhood.}
   \label{fig:three-cases}
\end{figure}
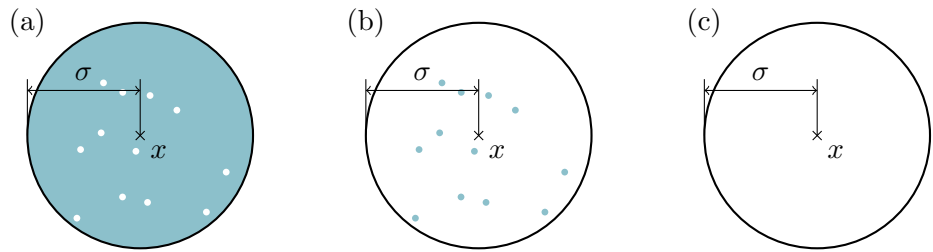

\setcounter{page}{0}
\thispagestyle{empty}
\newpage

\section{Introduction}\label{sec:intro}

Nash equilibrium is the central solution concept in game theory. 
Computational complexity results establishing the intractability of Nash equilibrium~\cite{savani2006hard,chen2009settling,daskalakis2009complexity} suggest that players that are even mildly computationally bounded may not be able to converge to a Nash equilibrium in the worst case.
However, the fragility of these intractable game constructions, together with the fact that random games are tractable~\cite{barany2007nash}, have led experts to conjecture that Nash equilibrium should have smoothed polynomial time algorithms (e.g.~\cite{spielman2006smoothed} Conjecture 15, and \cite{chen2009settling} Conjecture 2). 
If these conjectures were true, they could explain why players in realistic games can converge to equilibrium. In this paper, we prove that even with aggressive smoothing perturbations of constant magnitude, finding a Nash equilibrium continues to be \PPAD-complete (under randomized reductions).

\begin{definition}[\textbf{$X$-SMOOTHED-NASH}] \hfill

For a distribution $X$ on $\mathbb{R}$ and problem size $n$,
fix worst-case $n\times n$ matrices $W_A,W_B$ with entries in $[-1, 1]$,
and let $N_A, N_B$ be $n\times n$ matrices whose entries are drawn i.i.d.~from $X$.
$X$-SMOOTHED-NASH is the problem of computing,
with probability%
\footnote{Amplifying the success probability of smoothed algorithms is generally non-trivial (and sometimes impossible). We note that our hardness result continues to hold even for algorithms that are only required to succeed with probability $o(1)$.}
 at least $1 - \frac{1}{n}$, 
 a Nash equilibrium %
 of the game $(W_A+N_A, W_B + N_B)$.
\end{definition}

\begin{theorem}[Main Theorem]\label{thm:main} \hfill\\
There exists a universal constant $\eps > 0$, such that for any probability distribution $X$ on $[-\eps, \eps]$, $X$-SMOOTHED-NASH is \PPAD-hard under a randomized reduction.\footnote{Formally, we assume that there is a (single-dimensional) distribution  $X'$ such that $\Pr_{x \sim X, x' \sim X'}(|x - x'| > 1/\poly(n)) \leq 1/\poly(n)$ and $X'$ can be sampled by randomized polynomial-time algorithm. This holds for any natural smoothing distribution -- e.g. truncated Gaussian or uniform. For arbitrary $X$ such $X'$ can be sampled by a randomized algorithm which receives as input a $\poly(n)$-size approximation of the CDF of $X$. For $X$ where such advice is necessary, our arguments show that $X$-SMOOTHED-NASH is \PPAD-hard under randomized reductions with polynomial-length advice.}
\end{theorem}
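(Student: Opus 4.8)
The plan is to give a randomized polynomial-time reduction from the \PPAD-complete problem of computing a $1/\poly(n)$-approximate Nash equilibrium of a bimatrix game (equivalently, solving a generalized circuit in the sense of Chen--Deng--Teng~\cite{chen2009settling}) to the problem of computing an (exact) Nash equilibrium of a smoothed game. The basic obstruction is that the arithmetic-gate gadgets of that framework are $1/\poly(n)$-sensitive in their payoff entries, so a perturbation of \emph{constant} magnitude destroys them; the reduction must therefore engineer, out of a constant-magnitude perturbation, an effect that is only $1/\poly(n)$ at the ``logical'' level. To this end I would build $W_A,W_B$ from two ingredients. First, \emph{thicken}: replace each logical pure action by a block of $k = \poly(n)$ pure actions carrying the \emph{same} gate payoffs --- so the gate part of $W_A,W_B$ is constant on each pair of blocks --- and reinterpret the value carried by a circuit wire as the total probability a player places on the corresponding block. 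Second, on the coordinates of each block plant an independently sampled, $\Theta(1)$-scaled \emph{random zero-sum game} $M$ (adding $+M$ to $W_A$ and $-M$ to $W_B$ there), with scale large compared to $\eps$ but small enough to keep all entries in $[-1,1]$.

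The technical heart --- and the step I expect to be the main obstacle --- is a lemma stating that the planted games constrain \emph{every} Nash equilibrium of the smoothed game $(W_A+N_A,\,W_B+N_B)$: in any such equilibrium, each player's strategy restricted to a block is close to uniform on that block, hence far from pure. For the bare random zero-sum game this holds because a random matrix has no near-dominant row or column, so concentrating mass on a few coordinates of a block is heavily exploitable by the opponent within that block and cannot occur in an approximate equilibrium --- the players are forced to spread. The delicate part, exactly the ``even after smoothing'' claim of the abstract, is that this survives (a) adding i.i.d.\ noise $N_A,N_B$ from the \emph{arbitrary} distribution $X$ on $[-\eps,\eps]$, and (b) the coupling of each block to the rest of the game through off-block entries and off-block noise, where the full $(mk)\times(mk)$ noise matrix resurfaces and its large operator norm must be controlled. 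For (a), I would split $X$ into a point mass at its mean $\mu$ plus a mean-zero fluctuation of magnitude $\le 2\eps$: the $\mu$-part contributes the all-ones matrix on each block, a constant added to payoffs which changes no incentives, while the mean-zero $k\times k$ fluctuation has small operator norm with high probability and so perturbs the within-block payoff functionals by only $O(\eps)$ --- a constant, but one dominated by the $\Theta(1)$ ``restoring force'' of the planted game, which is what keeps the spreading in place. Item (b) is the subtlest estimate; I expect it must be done by bounding the effect of off-block noise on \emph{block-averaged} payoffs rather than on single coordinates, so that the relevant deviation is $n^{-\Omega(1)}$ rather than a crude operator-norm bound.

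Granting the lemma, the reduction closes by bookkeeping. By near-uniformity, the actions inside a block all yield a player essentially the common payoff of the corresponding logical action, up to an additive error of $n^{-\Omega(1)}$ (the planted games' values, being $\tilde O(1/\sqrt k)$, are pre-subtracted from the gate entries --- they are computable once the games are sampled --- and the residual noise contributes a further $n^{-\Omega(1)}$), which is below the $1/\poly(n)$ tolerance of the generalized circuit once $k$ is a large enough polynomial. Hence the vector of block-aggregate probabilities read off from any Nash equilibrium of the smoothed game is an approximate equilibrium of the logical game, and the Chen--Deng--Teng correspondence converts it --- in polynomial time from the equilibrium itself --- into a solution of the original \PPAD-hard instance. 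The remaining points are routine: the high-probability events (the sampled zero-sum games being spreading, and concentration of $N_A,N_B$) fail with probability $o(1/n)$, compatible with the $1-1/n$ success requirement (and in any case the footnoted weaker $o(1)$-success version suffices); the reduction needs only an inverse-polynomial approximation $X'$ of $X$ --- the distribution of the footnote --- to fix the planting scale and to verify the random-matrix estimates for the law of the planted entries convolved with $X$; and taking the universal constant $\eps$ small enough closes every slack. This gives the randomized reduction and proves Theorem~\ref{thm:main}.
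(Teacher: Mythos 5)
Your architecture matches the paper's: thicken the \PPAD-hard game by tensoring each logical action into a polynomial-size block of identical payoffs, superimpose a $\Theta(1)$-scale random zero-sum game so that every equilibrium of the smoothed instance is forced to spread, argue that over spread strategies the zero-sum part and the i.i.d.\ noise average out to $n^{-\Omega(1)}$, and read off block-aggregate probabilities as an approximate equilibrium of the logical game, invoking Chen--Deng--Teng. Your treatment of non-symmetric $X$ (mean plus bounded mean-zero fluctuation) differs from the paper's symmetrization $Y = X - X'$ with one copy absorbed into the worst-case matrices, but both are workable. One structural caveat: you plant an independent zero-sum game ``on the coordinates of each block,'' which suggests a block-diagonal planting; the zero-sum gadget must cover \emph{all} pairs of row-block and column-block (equivalently, be a single global $n\times n$ random zero-sum matrix, as in the paper), since otherwise an equilibrium concentrated on an off-diagonal block pair feels no restoring force.

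The genuine gap is the central lemma, which you state but do not prove, and which you state in a form that is both stronger than needed and false: ``each player's strategy restricted to a block is close to uniform on that block'' cannot hold, since blocks corresponding to logical actions outside the support of the logical equilibrium receive (essentially) zero weight, and even weighted blocks need not be internally near-uniform. What is actually provable, and suffices, is a \emph{global} spread bound: every equilibrium has support $> n^{0.96}$ and $\|\bx\|_2, \|\by\|_2 \le n^{-0.2}$. Your bookkeeping then goes through not via per-block uniformity but by testing deviations to the \emph{uniform distribution on a block} (well-spread by construction) and controlling all bilinear forms $\bx^\top (Z + N)\by$ via a concentration inequality for subgaussian bilinear forms over $\ell_1$-bounded vectors. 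Moreover, the intuition you offer for the lemma --- that a concentrated strategy is exploitable within the block --- is the folklore argument for \emph{pure} random zero-sum games; here the game is not zero-sum (the gate payoffs are general-sum and the i.i.d.\ noise is not antisymmetric), and making the argument survive requires real work: an anti-concentration (``benchmark'') inequality for Rademacher linear forms strong enough to beat a union bound over all $n^{O(n^{0.96})}$ candidate small supports, and a separate Littlewood--Offord/Glivenko--Cantelli argument to upgrade large support to small $\ell_2$ norm. As written, the proposal is a correct reduction skeleton resting on an unproven and mis-stated core lemma.
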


\subsection{Complexity context: smoothed analysis vs hardness of approximation}
In their 2006 survey on smoothed analysis, Spielman and Teng posed the challenge (\cite{spielman2006smoothed}, Open Question 11) of exploring the connections between smoothed complexity and hardness of approximation. Concretely, they considered the example of two-player Nash equilibrium subject to $\sigma$-bounded perturbations: Given a hard game $A,B \in [-1,1]^{n \times n}$, perturbing each entry independently gives rise to a new instance $\hat{A},\hat{B} \in [-1-\sigma,1+\sigma]^{n \times n}$; any Nash equilibrium of $\hat{A},\hat{B}$ is an $O(\sigma)$-approximate-Nash equilibrium of the original game $A,B$.
 Hence, solving Nash equilibrium in the smoothed model is at least as hard as approximating Nash~\cite[Proposition 9.12]{spielman2006smoothed}.

More generally, any hard instance $x$ of any computational problem%
\footnote{Naturally, the correspondence between approximation algorithms and smoothed analysis requires matching the respective notions of approximation and smoothing perturbations.}
(e.g.~$x= (A,B)$ in the case of Nash) can be in one of three states (as illustrated in Figure~\ref{fig:three-cases}):
\begin{description}
\item[Smoothed-algorithmica\footnotemark:]
\footnotetext{The name is ``smoothed-algorithmica'' is a cultural reference to Impagliazzo's five worlds~\cite{impagliazzo1995personal} and should not be interpreted to imply a technical connection. 
Formally speaking, the existence of smoothed-algorithmica yet intractable instances is in fact consistent with any of Impagliazzo's worlds {\em except} Algorithmica.} 
Most instances in $x$'s neighborhood can be solved efficiently.
\item[Smoothed-complexity:] A small fraction of $x$'s neighborhood can be solved efficiently.
\item[Hardness-of-approximation:] Finding a solution for any instance in $x$'s neighborhood is intractable.\footnote{Intuitively we would like to say that every instance in $x$'s neighborhood is intractable. Formally, however, this may be inaccurate. In fact in the case of Nash equilibrium it is provably false! Given game $(A,B)$, consider game $(A',B)$, where $A' - A$ is a matrix whose entries are identically equal to some small $\lambda$ which encodes a Nash equilibrium for $(A,B)$ (and hence also for $(A',B)$).}
\end{description}

Of course, as in Spielman and Teng's proposition, hardness-of-approximation immediately rules out efficient smoothed algorithms. 
But most interesting open problems in smoothed analysis admit approximation algorithms; this limits the applicability of using hardness-of-approximation to prove new smoothed-complexity results.

In contrast to the thriving literature on hardness of approximation and smoothed algorithms, smoothed complexity results are rare. In this paper, we make a small step toward establishing a theory of smoothed complexity, in the context of Spielman and Teng's original example: two-player Nash equilibrium subject to bounded perturbations%
\footnote{Speilman and Teng discuss perturbing each entry by a uniform-$[-\sigma,\sigma]$ noise, but our result holds for any bounded i.i.d.~perturbations}.
While settling an open problem in equilibrium computation, we believe that our result is just the tip of the iceberg of the theory of smoothed complexity.

\subsection{Historical context}

In 1928 Von Neumann~\cite{neumann1928theorie} proved that every (finite, perfect information) zero-sum game has an equilibrium; this result was extended to general games by Nash in 1951~\cite{nash1951non}. In 1947 Dantzig~\cite{dantzig1998linear} designed the simplex algorithm for solving linear programs (and thus also zero-sum games); in 1964 Lemke and Howson~\cite{lemke1964equilibrium} gave a simplex-like algorithm for general games.
Both are known to take exponential time in the worst case~\cite{klee1972good,savani2006hard}, but are observed to perform much better in practice (e.g.~\cite{shamir1987efficiency,avis2009enumeration}).

 For linear programming, Khachiyan~\cite{khachiyan1979polynomial} gave the first polynomial time algorithm in 1979, and Spielman and Teng proved in 2004~\cite{spielman2004smoothed} that the simplex algorithm has smoothed polynomial complexity. It was natural to hope (and in fact quite widely believed, e.g.~\cite{daskalakis2005complexity} and~\cite[Conjecture 9.51]{spielman2006smoothed} respectively) that the last two results would again be extended to general games. Surprisingly, this was ruled out by Chen, Deng, and Teng~\cite{chen2009settling}. Specifically, they showed that $1/\poly(n)$-{\em approximate} Nash equilibrium is hard, which by Spielman and Teng's proposition rules out any smoothed efficient algorithms for noise magnitude $1/\poly(n)$ (assuming $\PPAD$ is not contained in search-\RP).
Chen, Deng, and Teng nevertheless conjectured that for constant magnitude noise, two-player Nash equilibrium should have a polynomial time algorithm. 

Progress on smoothed complexity of Nash with $\varepsilon$-noise (for small constant $\varepsilon>0$) was made by~\cite{rubinstein2016settling} who proved the following hardness of approximation result:
assuming the ``Exponential Time Hypothesis for \PPAD%
\footnote{The Exponential Time Hypothesis (ETH) for \PPAD~is a strengthening of $\PPAD \nsubseteq \text{(search-)}\RP$, which postulates that End-of-Line (the canonical \PPAD-problem) requires $2^{\tilde{\Omega}(n)}$ time.}
'', 
finding an $\varepsilon$-approximate Nash equilibrium requires quasipolynomial ($\approx n^{\log(n)}$) time.
By Spielman and Teng's proposition, this hardness of approximation result also implies an analogous quasipolynomial hardness in the smoothed setting.
For hardness of approximation, the result of~\cite{rubinstein2016settling} is essentially optimal due to a matching quasipolynomial time approximation algorithm~\cite{lipton2003playing}.
This quasipolynomial time algorithm does not extend to the smoothed case, and a large gap in the complexity of constant-smoothed Nash (quasipolynomial vs exponential) remained open.

In this work, we resolve the complexity of two-player Nash equilibrium with constant-magnitude smoothing, proving that it is \PPAD-complete (under randomized reductions).
Compared to~\cite{rubinstein2016settling}, we rule out smoothed polynomial time algorithms under a much weaker assumption ($\PPAD \nsubseteq \text{(search-)}\RP$ vs ETH for \PPAD).\footnote{As discussed above, this holds in the case that $X$ is approximately polynomial-time sampleable -- otherwise we require the assumption $\PPAD \nsubseteq \text{(search-)}\RP/\poly$.} Alternatively, comparing both results under the same assumption,
ETH for \PPAD, we prove a much stronger lower bound on the running time ($2^{\poly(n)}$ vs $\approx n^{\log(n)}$)%
\footnote{In fact, under the plausible hypothesis that the true complexity of End-of-Line (the canonical \PPAD-problem) is $\approx 2^{n^\alpha}$ for some constant $0< \alpha \leq 1/2$, our result implies the qualitatively-same strong lower bound, {\em and} the result of ~\cite{rubinstein2016settling} completely breaks.}.
Finally, another advantage of our result compared to~\cite{rubinstein2016settling} is that our proof is much simpler, and in particular does not require any PCP-like machinery.

\subsection{Intuition and roadmap}

We will reduce $1/\poly(n)$-approximate Nash to $X$-SMOOTHED-NASH.
The starting point of our reduction is the following simple idea: for any mixed strategies $(x,y)$ which are spread over a large number of actions, the noise from the smoothing averages out.
In contrast, if we start with an off-the-shelf \PPAD-hard game $(P,Q)$ and amplify it by simple repetition (formally, tensor the payoff matrices $P,Q$ with the all ones matrix $J$), the signal from $P,Q$ will remain strong even with respect to well-spread strategies.
This means that given a well-spread (in a sense we make precise later) Nash equilibrium $x,y$ for a tensored, smoothed game $(P \otimes J + N_P, Q \otimes J + N_Q)$, we can recover a $1/\poly(n)$-approximate equilibrium for $(P,Q)$.

There is one major problem with the reduction suggested above: an oracle for $X$-SMOOTHED-NASH might not return a well-spread equilibrium $(x,y)$.
Our goal henceforth is to modify this construction to create a game where no Nash equilibrium has strategies concentrated on a small number of actions.
Note that pure or even small-support equilibria don't break only our proof approach: they can be found efficiently by brute-force enumeration, so such games cannot be hard.

Which games have no strategies concentrated on a small number of actions?
At one extreme, if the entries of the payoff matrices are entirely i.i.d.~(from any continuous distribution), a folklore result states that the game has a pure equilibrium with probability approaching $1-1/e$.
This creates a significant problem: we have to work with games where the entries are smoothed with independent noise -- if such games turn out also to have pure or small-support strategies, then they cannot be hard.

In contrast to i.i.d.~random games, we observe that random zero-sum games tend to have only well spread equilibria~\cite{Rob06,jonasson2004optimal}.
For example, they are exponentially unlikely to have a pure equilibrium; intuitively, if a pure strategy profile is exceptionally good for one player, it is likely exceptionally bad for the other.
In the context of our proof approach, another advantage of random zero-sum games is that with respect to well-spread mixed strategies, they will also average out.
That is, even if we add a random zero-sum game $Z$, we can still hope to recover a $1/\poly(n)$ Nash equilibrium for $(P,Q)$ from a well-spread equilibrium for $(P \otimes J + Z + N_P, Q \otimes J - Z + N_Q)$.
Our main technical task is to show that adding a random zero-sum game in this fashion produces a game with only well-spread Nash equilibria, even in the presence of the i.i.d. smoothing $N_A, N_B$.

\paragraph{Our first step is to rule out all small support equilibria.}
In Section~\ref{sec:large-support} we formalize the above intuition, showing that every equilibrium of a random zero-sum game has large supports, even when we add constant-magnitude perturbations.
For technical reasons, our proof in this section works for random zero-sum games whose entries are drawn uniformly from discrete $\{-1,1\}$.

\paragraph{Our second step is to obtain a robust version of no-small-support.}
Namely, building on the fact every equilibrium has large support, in Section~\ref{sec:small-norm} we prove that it must be well-spread (formally, the mixed strategies have small $||\cdot||_2$ norm).
For technical reasons, our proof in this section works for random zero-sum games whose entries are drawn uniformly from continuous $[-1,1]$.
Fortunately, we can make both of proofs work simultaneously by taking the sum of a $\{-1,1\}$ and a $[-1,1]$ zero-sum games.

\paragraph{Putting it all together.}
To summarize, our final construction of hard instance is given by:
\begin{gather*}
\begin{array}{r@{\ }c@{\ }c@{\ }c@{\ }c@{\ }l}
A   := & P \otimes J & + &Z_{\{-1,1\}} &+&\,\,\, Z_{[-1,1]}\\
B  := & \underbrace{Q \otimes J}_{\text{\PPAD-hard}} &-&  \underbrace{Z_{\{-1,1\}}}_{\text{large support}} &-  &\underbrace{Z_{[-1,1]}}_{\text{well-spread}} ,
\end{array}
\end{gather*}
where $Z_{\{-1,1\}},Z_{[-1,1]}$ are random matrices with i.i.d.~entries uniformly sampled from $\{-1,1\}$ and $[-1,1]$ (respectively), and $(P,Q)$ is a \PPAD-hard bimatrix game, and $J$ is an (appropriate-dimension) all-ones matrix.

In Section~\ref{sec:PPAD}, we show that when the Nash equilibrium strategies are well-spread, the random zero-sum games and random perturbations average out.
Thanks to the amplification, the signal from $(P,Q)$ remains sufficiently strong.
Thus, we can map any Nash equilibrium of $(A,B)$ to a $1/\poly(n)$-approximate Nash equilibrium of $(P,Q)$. By~\cite{chen2009settling} this suffices to establish \PPAD-hardness (under randomized reductions). 

\begin{remark*}[Inverse-polynomial signal-to-noise ratio] \hfill

Interestingly, the amplification of $(P,Q)$ by repetition is so powerful that our proof would go through even if we were to multiply $P$ and $Q$ by an inverse-polynomial small scalar.%
\footnote{We only informally state the result to prioritize simplicity, but it will be evident by the remarks in Section~\ref{subsec:remarks}.}
In this sense, we show that Nash remains intractable even subject to noise (zero-sum + i.i.d.) that is {\em polynomially larger} than the worst-case signal. 
\end{remark*}

\subsection{Additional related work}
Subsequent to the seminal works of~\cite{daskalakis2009complexity, chen2009settling} which showed that Nash equilibrium is \PPAD-complete, there has been an active line of work on algorithms with provable guarantees for exact or approximate equilibria in special cases including:  sparse games~\cite{barman2018approximating}, low-rank games~\cite{kannan2010games,adsul2011rank}, positive-semidefinite games~\cite{alon2013approximate}, anonymous games~\cite{daskalakis2015approximate,cheng2017playing}, tree {\linebreak}games~\cite{elkind2006nash,barman2015approximating,ortiz2016fptas}.
Complexity limitations for most of these special cases known as well: sparse games~\cite{chen2006sparse,liu2018approximation}, low-rank games~\cite{mehta2014constant}, anonymous games~\cite{chen2015complexity}, and tree games~\cite{deligkas2020tree}.

More relevant to the topic of smoothed analysis, it is known that when equilibria do not fluctuate when the input is perturbed, finding equilibria can be done efficiently~\cite{balcan2017nash}. 
Furthermore, a game chosen at random is likely to have easy-to-find equilibria~\cite{barany2007nash}. 

Spielman and Teng~\cite[Open Question 11]{spielman2006smoothed} ask whether there is a relation between approximation hardness and smoothed lower bounds: the former implies the latter, but little else is known regarding smoothed lower bounds. 
For the case of integer linear programs over the unit cube, Beier and V\"ocking~\cite{beier2006typical} show that a problem has polynomial smoothed complexity if and only if it admits a pseudo-polynomial algorithm. Note that a pseudo-polynomial algorithm can be used to approximate by truncating input numbers.
For other problems, we are aware of a few papers that argue smoothed complexity lower bounds via approximation hardness, e.g.~\cite{chen2009settling,huang2007approximation,kelner2007hardness}.%

\section{Preliminaries}\label{sec:prelim}
We formally define the problem here, and present some remarks.
Let $n$ be a positive integer. We let $\be_i \in \R^n$ be the $i$th indicator vector. Let $A, B \in \R^{n \times n}$ be payoff matrices (corresponding to Alice and Bob). We define a \emph{Nash equilibrium} to be vectors $\bx, \by \in \R^n_{\ge 0}$, called \emph{mixed strategies}, such that $\|\bx\|_1 = \|\by\|_1 = 1$ we have that
\begin{align*}
\bx\trans A \by &= \max_{i \in [n]} \be_i\trans A \by\\
\bx\trans B \by &= \max_{i \in [n]} \bx\trans B \be_i.
\end{align*}

We say that an equilibrium is $\eps$-approximate if
 
\begin{align*}
\bx\trans A \by + \eps &\ge \max_{i \in [n]} \be_i\trans A \by\\
\bx\trans B \by + \eps &\ge \max_{i \in [n]} \bx\trans B \be_i.
\end{align*}

For a given equilibrium $\bx, \by$ (often clear from context), we let $A^{00}$ and $B^{00}$ be the restrictions of $A$ and $B$ to $\supp(\bx) \times \supp(\by)$, respectively. We let $A^{10}$ and $B^{10}$ be the restrictions to $\overline{\supp(\bx)} \times \supp(\by)$, etc.
Computing any Nash equilibrium, even $n^{-O(1)}$-approximate, is known to be \PPAD-complete~\cite{chen2009settling}:

\begin{theorem}[\cite{chen2009settling}]\label{thm:CDT}
For all $c > 0$, computing an $n^{-c}$-approximate Nash equilibrium of an $n\times n$ bimatrix game with entries bounded in $[0, 1]$ is \PPAD-complete.
\end{theorem}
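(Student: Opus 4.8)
This is the main theorem of \cite{chen2009settling}, which we invoke as a black box; for orientation I sketch the shape its proof would take. The natural route is through the \emph{generalized circuit} problem $\eps\text{-}\textsc{GCircuit}$: one is given a circuit whose wires carry values in $[0,1]$ and whose gates implement arithmetic and logical operations (set-to-constant, addition, subtraction, multiplication by a constant, comparison, Boolean $\mathrm{AND}/\mathrm{OR}/\mathrm{NOT}$), with \emph{no} distinguished inputs or outputs and possibly with feedback loops, and one must assign values to all wires so that every gate constraint holds up to additive error $\eps$. Membership in \PPAD{} is the easy direction: a Nash equilibrium is a fixed point of the Nash best-response map, and an $\eps$-approximate fixed point of this (Lipschitz) map can be located by the usual path-following / \textsc{End-of-Line} argument, so approximate Nash, and hence $\eps\text{-}\textsc{GCircuit}$, lies in \PPAD.

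For hardness, the first and most delicate step is to show that $\eps\text{-}\textsc{GCircuit}$ is \PPAD-hard even when $\eps = n^{-\Theta(1)}$. This is the technical heart of \cite{chen2009settling}: starting from \textsc{End-of-Line}, one passes through a Brouwer-type fixed-point problem and then simulates it by a generalized circuit, arranging the reduction so that the error is amplified by at most a constant factor at each of the (polynomially many) stages, so that an inverse-polynomial slack survives end to end.

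The second step builds, from an $m$-gate instance of $\eps\text{-}\textsc{GCircuit}$, a single bimatrix game with $\poly(m)$ pure strategies and payoffs rescaled into $[0,1]$. There is a pair of strategies per wire; a comparator / matching-pennies-style block of the payoff matrices forces the two players, in any approximate equilibrium, to place (nearly) equal mass on the two copies of a wire, and that common mass encodes the wire's $[0,1]$-value. Each gate is realized by a small block of payoffs whose best-response conditions are (approximately) tight only when the masses on its input and output wires satisfy the corresponding arithmetic relation. One then checks that reading the wire values off any $\eps'$-approximate equilibrium, for a suitable $\eps' = \poly(\eps)$, yields a valid $\eps$-approximate assignment for the circuit.

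The main obstacle throughout is \emph{error control}: one must verify that an approximate equilibrium cannot violate any single gate by more than a bounded amount, and --- crucially --- that these per-gate errors do not compound across the circuit into something exponentially small, which is exactly what would be needed to keep the whole chain of reductions working at inverse-polynomial precision rather than only at constant precision.
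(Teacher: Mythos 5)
The paper does not prove this statement; it is imported verbatim as the main hardness result of \cite{chen2009settling}, exactly as you treat it. Your black-box invocation is therefore the same "approach" as the paper's, and your orientation sketch (membership via the fixed-point/\textsc{End-of-Line} argument, hardness via inverse-polynomially-accurate generalized circuits encoded into a bimatrix game with wire-pair strategies and gate gadgets) is a faithful summary of how the cited proof actually goes.
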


\subsection{Remarks on the reduction}\label{subsec:remarks}

The reduction, presented in Section~\ref{sec:reduction}, will ultimately take a hard instance of Theorem~\ref{thm:CDT} and transform it into a instance of $X$-SMOOTHED-NASH, for suitable distributions $X$. By the nature of the reduction, if one applies the same reduction with a wider hardness-of-approximation guarantee, one can deduce that for a suitable constant $c > 0$, it is \PPAD-hard under a randomized reduction to find a $n^{-c}$-approximate equilibrium of $X$-SMOOTHED-NASH (see, e.g., Eq.~\ref{eq3}). This has two interesting implications.

First, this means that if you truncate the output of the distribution $X$, as well as the uniform distribution sampled in the reduction, to $O(\log n)$ bits, it is still \PPAD-hard to find an (approximate) equilibrium for the resulting instance . In particular, the smoothed complexity result is robust to the underlying arithmetic representation of the payoffs.

Second, scaling down the hard instance of Theorem~\ref{thm:CDT} by a small polynomial still maintains an $n^{-O(1)}$ hardness-of-approximation guarantee. Thus, as mentioned in the introduction, the reduction implies that  Nash remains intractable even subject to noise (zero-sum + i.i.d.) that is \emph{polynomially} larger than the worst-case signal.

\subsection{Concentration for random bilinear forms}\label{sec:matrix-bilinear-concentration}
\newcommand{\N}{\mathbb{N}}
\newcommand{\iprod}[1]{\langle #1 \rangle}
\renewcommand{\E}{\mathbb{E}} %

We introduce here the following concentration bound which is useful in our result.

\begin{definition}[Subgaussian random variable]
  A $\R$-valued random variable $X$ is subgaussian with variance proxy $s^2 > 0$ if for all $t > 0$, $\E \exp(tX) \leq \exp(s^2 t^2 / 2)$.
  Note that if $X \in [-b,b]$ for some $b > 0$ with probability $1$, then $X$ is subgaussian with variance proxy $b^2/4$.
\end{definition}

\begin{lemma}\label{lem:conc-subgauss}
  Let $A$ be an $n \times n$ matrix with independent subgaussian entries with variance proxy at most $1$.
  For all $u > 0$, with probability at least $1-\exp(-u^2)$, all $\bx,\by \in \R^n$ with $\|\bx\|_2 = \|\by\|_2 = 1$ have
  \[
  \bx^\top A \by \leq O(\sqrt{\log n} + u)(\|\bx\|_1 + \|\by\|_1) \, .
  \]
  As a corollary, with the same probability, all $\bx,\by \in \R^n$ with $\|\bx\|_1,\|\by\|_1 \leq 1$ have
  \[
    \bx^\top A \by \leq O(\sqrt{\log n} + u)(\|\bx\|_2 + \|\by\|_2) \, .
  \]
\end{lemma}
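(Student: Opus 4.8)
The plan is to first dispatch the ``corollary'' from the main bound by homogeneity: given $\bx,\by$ with $\|\bx\|_1,\|\by\|_1\le 1$ (and both nonzero, else the bound is trivial), apply the first inequality to the unit vectors $\bx/\|\bx\|_2$ and $\by/\|\by\|_2$ and multiply through by $\|\bx\|_2\|\by\|_2$; since $\|\bx\|_2\cdot\tfrac{\|\bx\|_1}{\|\bx\|_2}=\|\bx\|_1\le 1$ this contributes $\le \|\by\|_2$, and symmetrically, giving $\bx^\top A\by\le O(\sqrt{\log n}+u)(\|\bx\|_2+\|\by\|_2)$. So it suffices to prove the first statement, i.e. $\bx^\top A\by\le O(\sqrt{\log n}+u)(\|\bx\|_1+\|\by\|_1)$ for all unit $\bx,\by$. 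I would prove this from two ingredients: (i) a uniform bound on operator norms of submatrices, and (ii) a decomposition of every $\ell_2$-unit, $\ell_1$-bounded vector as a convex combination of sparse vectors.

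For (i): with probability at least $1-\exp(-u^2)$, writing $\phi(k):=\sqrt{k\log(2en/k)}$, every $k\times k$ submatrix $A_{S,T}$ (with $k\in[n]$, $|S|=|T|=k$) satisfies $\|A_{S,T}\|_{\mathrm{op}}\le\psi(k):=C(\phi(k)+\sqrt{\log n}+u)$ for a universal $C$. This is the textbook $\epsilon$-net argument: fix $S,T$ and a $\tfrac14$-net $\mathcal N$ of the unit sphere in $\R^k$ with $|\mathcal N|\le 9^k$, so $\|A_{S,T}\|_{\mathrm{op}}\le 2\max_{\bu,\bv\in\mathcal N}|\bu^\top A_{S,T}\bv|$; each $\bu^\top A_{S,T}\bv$ is subgaussian with variance proxy $\|\bu\|_2^2\|\bv\|_2^2=1$, and one union-bounds over $\mathcal N\times\mathcal N$, over the $\binom nk^2$ choices of $(S,T)$, and over $k\in[n]$, picking the threshold so the total failure probability is $\le\exp(-u^2)$; since $\log\binom nk\le k\log(en/k)\le\phi(k)^2$ and $\sqrt k\le\phi(k)$, the threshold is $O(\phi(k)+\sqrt{\log n}+u)$.

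For (ii): if $\|\bx\|_2\le 1$ and $\|\bx\|_1\le\sqrt k$ for an integer $1\le k\le n$, then $\bx\in 2\,\mathrm{conv}(T_k)$, where $T_k:=\{\bv\in\R^n:|\supp(\bv)|\le k,\ \|\bv\|_2\le 1\}$. Indeed, after flipping signs (which preserves $T_k$) assume the coordinates of $\bx$ are nonnegative and decreasing, split $[n]$ into consecutive length-$k$ blocks $B_0,B_1,\dots$, and write $\bx=\sum_\ell c_\ell\bv_\ell$ with $c_\ell:=\|\bx|_{B_\ell}\|_2$ and $\bv_\ell:=\bx|_{B_\ell}/c_\ell\in T_k$. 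Then $c_0\le\|\bx\|_2\le 1$, and for $\ell\ge 1$ each coordinate in $B_\ell$ is at most the average of the coordinates in $B_{\ell-1}$, so $\|\bx|_{B_\ell}\|_\infty\le\tfrac1k\|\bx|_{B_{\ell-1}}\|_1$ and $c_\ell\le\tfrac{1}{\sqrt k}\|\bx|_{B_{\ell-1}}\|_1$; summing gives $\sum_{\ell\ge1}c_\ell\le\tfrac{1}{\sqrt k}\|\bx\|_1\le 1$, so $\sum_\ell c_\ell\le 2$ and rescaling yields the convex representation. Combining the two ingredients: on the good event, for unit $\bx,\by$ set $s:=\max(\|\bx\|_1,\|\by\|_1)\in[1,\sqrt n]$ and $k:=\lceil s^2\rceil$; if $s>\tfrac12\sqrt n$ then $\bx^\top A\by\le\|A\|_{\mathrm{op}}\le\psi(n)=O(\sqrt n+u)=O((\sqrt{\log n}+u)s)$ and we are done, and otherwise $k\le n$ and $\bx=\sum_i\lambda_i\bu_i$, $\by=\sum_j\mu_j\bw_j$ are convex combinations of vectors in $2T_k$, so
\[
\bx^\top A\by=\sum_{i,j}\lambda_i\mu_j\,\bu_i^\top A\bw_j\le\sum_{i,j}\lambda_i\mu_j\,\|\bu_i\|_2\|\bw_j\|_2\,\psi(k)\le 4\psi(k),
\]
using that $\bu_i^\top A\bw_j$ only involves a $k\times k$ submatrix of $A$ (pad the supports up to size $k$). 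Finally $\psi(k)\le C(\phi(k)+\sqrt{\log n}+u)$ and $\phi(k)=\sqrt{k\log(2en/k)}\le\sqrt k\,\sqrt{\log(2en)}\le 2s\cdot O(\sqrt{\log n})$, so $\psi(k)=O((\sqrt{\log n}+u)s)\le O((\sqrt{\log n}+u)(\|\bx\|_1+\|\by\|_1))$.

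The step I expect to be the crux is ingredient (ii): recognizing that the body $\{\|\bx\|_2\le 1,\ \|\bx\|_1\le\sqrt k\}$ is, up to a factor $2$, the convex hull of $k$-sparse unit vectors; once that is in hand the rest reduces to the standard operator-norm bound for a random submatrix. One could instead prove the lemma directly by dyadically bucketing the coordinates of $\bx$ and $\by$ by magnitude and summing submatrix operator-norm bounds over pairs of buckets, but the bookkeeping is more delicate — one has to exploit the anticorrelation between the number of ``active'' buckets of a vector and its $\ell_1$ norm to avoid losing an extra $\sqrt{\log n}$ factor — so the convex-hull route is cleaner.
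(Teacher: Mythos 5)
Your proof is correct, but it takes a genuinely different route from the paper's. The paper runs the whole argument through Talagrand's comparison inequality: it considers the process $X_{(\bx,\by)} = \bx^\top A\by/(\|\bx\|_1+\|\by\|_1)$ over pairs of unit vectors, verifies the subgaussian increment condition via the algebraic identity $\bigl\|\tfrac{\bx\by^\top}{m}-\tfrac{\bx'(\by')^\top}{m'}\bigr\|_F^2\le\bigl\|\tfrac{(\bx,\by)}{m}-\tfrac{(\bx',\by')}{m'}\bigr\|_2^2$ (which reduces to the factorization $(1-\langle\bx,\bx'\rangle)(1-\langle\by,\by'\rangle)\ge 0$), and then bounds the resulting Gaussian width by $\E\|\bg\|_\infty=O(\sqrt{\log n})$. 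You instead give an elementary two-step argument: a union-bounded $\epsilon$-net estimate $\|A_{S,T}\|_{\mathrm{op}}\le O(\sqrt{k\log(en/k)}+\sqrt{\log n}+u)$ uniform over all $k\times k$ submatrices, combined with the Maurey-type fact that $\{\|\bx\|_2\le1,\|\bx\|_1\le\sqrt k\}\subseteq 2\,\mathrm{conv}(T_k)$ for $T_k$ the $k$-sparse vectors in the unit ball; your block-sorting proof of that inclusion and the final accounting ($\phi(\lceil s^2\rceil)\le O(s\sqrt{\log n})$, with the edge case $s>\tfrac12\sqrt n$ handled by the full operator norm) all check out, as does the homogeneity derivation of the corollary, which matches the paper's. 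The trade-off: the paper's proof is very short once Talagrand's comparison is taken as a black box, but that black box is a deep chaining theorem; your proof is longer and requires more bookkeeping but is entirely self-contained, using only subgaussian tail bounds and union bounds, and it makes explicit the geometric content (the interpolation between sparse and spread vectors) that the chaining argument handles implicitly. Both yield the same bound with the same probability guarantee.
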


The proof of this lemma is deferred to the Appendix.

\section{The Reduction, and Proof of Theorem~\ref{thm:main}}\label{sec:reduction}
\label{sec:PPAD}

First, we show in Section~\ref{subsec:sym} the reduction in the case that the noise distribution $X$ is symmetric, i.e., the probability of sampling $a$ and $-a$ is identical for all $a > 0$. We then show in Section~\ref{subsec:general} a slight modification which works for any distribution $X$.

\subsection{The symmetric case}\label{subsec:sym}

Let $\eps > 0$ be a sufficiently small constant. Let $X$ be any symmetric distribution on $[-\eps, \eps]$. Let $n, b$ be positive integers
such that $b$ divides $n$, $b = n^{0.01}$, and $n$ is sufficiently large. We divide $[n]$ into $b$ \emph{blocks} which we label $I_i := \{(i-1)\tfrac n b + 1, (i-1) \tfrac n b +2 , \hdots, i \cdot \tfrac n b\}.$ We let $\ell := n / b = n^{0.99}$ denote the \emph{block length}.

Let $P, Q \in \mathbb R^{b \times b}$ be payoff matrices. Let $J_\ell$ denote the $\ell\times \ell$ all $1$'s matrix. Let $Z_0$ be an $n\times n$ matrix whose entries are sampled i.i.d.~from the Rademacher distribution (i.e., the uniform distribution on $\{-1, 1\}$). Let $Z_1$ be an $n \times n$ matrix whose entries are sampled i.i.d.~from the uniform distribution on $[-1, 1]$.  Let $A_{\eps}, B_{\eps}$ be $n\times n$ matrices whose entries are i.i.d. sampled from $X$ (all distributions independent).\footnote{The to meet the definition of $X$-SMOOTHED-NASH, which specifies that the hard game must have entries between $[-1, 1]$, we can scale the construction (and thus $X$) by a factor of $3$.}
\begin{align*}
 A &:= P \otimes J_{\ell} + Z_0 + Z_1 + A_{\eps}\\
 B &:= Q \otimes J_{\ell} - Z_0 - Z_1 + B_{\eps},
\end{align*}
where $P \otimes J_{\ell}$ denotes the $n\times n$ matrix, where every entries in block $I_i \times I_j$ is $P_{i,j}$.

\vspace{1em}
We present here here the final result of this paper. We will refer without proof to a bound on the norm of the equilibrium strategy vectors, and we defer its proof to the rest of the paper, namely Sections~\ref{sec:large-support} and~\ref{sec:small-norm}.
This norm bound is the technical heart of this paper, and the present section illustrates its strength.

We seek to show that equilibria of the reduced game $(A,B)$ can be used to efficient produce approximate equilibria to the game $(P,Q)$, which we have assumed is hard to approximate.
Let $(\bx, \by)$ be an equilibrium of $(A, B)$. We will show in Section~\ref{sec:small-norm} that, with high probability, $\|\bx\|_2, \|\by\|_2 \le n^{-0.2}$, even when $\epsilon$ is a constant. Note that  $b=n^{0.01}$ is the dimension of the input game $(P,Q)$. Define $(\hat{\bx}, \hat{\by})$ to be distributions over $[b]$ such that for all $i \in [n]$
\begin{align*}
\hat{x}_i &= \sum_{i' \in I_i} x_{i'},&
\hat{y}_i &= \sum_{i' \in I_i} y_{i'}.
\end{align*}    
\begin{theorem}
With probability $1 - n^{-2}$, we have that $(\hat{\bx}$, $\hat{\by})$ is a $b^{-19}$-approximate equilibrium of $(P, Q)$.
\end{theorem}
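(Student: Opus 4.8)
The plan is to establish the approximation guarantee for $(\hat\bx,\hat\by)$ by showing that the contributions of the zero-sum gadgets $Z_0, Z_1$ and the smoothing noise $A_\eps, B_\eps$ to the payoff bilinear forms are all $O(b^{-19})$-small whenever they are evaluated against the equilibrium strategies $(\bx,\by)$ or against a deviation $(\be_k, \by)$ where $\be_k$ is a block-constant pure strategy. First I would record the key consequence of the norm bound from Section~\ref{sec:small-norm}: with probability $1-n^{-2}$ we have $\|\bx\|_2,\|\by\|_2 \le n^{-0.2}$. I would then note the elementary identity relating the two games: for any block-constant vectors — in particular for $\bx,\by$ collapsed to $\hat\bx,\hat\by$ — the term $P\otimes J_\ell$ contributes exactly $\hat\bx^\top P\hat\by$, since $\bx^\top (P\otimes J_\ell)\by = \sum_{i,j} P_{ij}\big(\sum_{i'\in I_i}x_{i'}\big)\big(\sum_{j'\in I_j}y_{j'}\big) = \hat\bx^\top P \hat\by$, and likewise the best block-deviation for the $(P,Q)$ game corresponds to deviating to a whole block $I_k$ in the $(A,B)$ game, i.e. to the uniform-over-$I_k$ strategy (scaled); this is where I must be slightly careful, since a single pure action $\be_k$ in the big game need not be block-constant, so I would instead compare $\hat\bx^\top P \hat\by$ to $\max_k \be_{I_k}^\top A\by$ where $\be_{I_k}$ is some convenient representative — either the uniform distribution on $I_k$ (which is block-constant, so $P\otimes J_\ell$ gives exactly $(P\hat\by)_k$) or simply observe that the actual Nash best-response value in the big game dominates this, which is the direction we need.

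Next I would bound each error term. For the noise matrices, apply Lemma~\ref{lem:conc-subgauss}: $A_\eps$ has entries in $[-\eps,\eps]$ hence subgaussian with variance proxy $\eps^2/4 \le 1/4$, so (after rescaling) with high probability $|\bx^\top A_\eps \by| \le O(\sqrt{\log n})(\|\bx\|_2+\|\by\|_2) \cdot \eps \le \eps\cdot O(\sqrt{\log n})\, n^{-0.2}$, which is far smaller than $b^{-19} = n^{-0.19}$. The same bound applies to $Z_0$ and $Z_1$ (Rademacher and uniform-$[-1,1]$ entries are both subgaussian with variance proxy at most $1$), giving $|\bx^\top (Z_0+Z_1)\by| \le O(\sqrt{\log n})\, n^{-0.2}$. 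Crucially, the \emph{same} matrix concentration event in Lemma~\ref{lem:conc-subgauss} holds simultaneously for \emph{all} unit-norm $\bx,\by$, so it also controls the deviation terms: $|\be_{I_k}^\top (Z_0+Z_1)\by|$ and $|\be_{I_k}^\top(A_\eps\text{ or }B_\eps)\by|$ are bounded by $O(\sqrt{\log n})(\|\be_{I_k}\|_2 + \|\by\|_2)$, and since $\|\be_{I_k}\|_2 \le 1$ this is $O(\sqrt{\log n})$ — but here I need it to be $o(b^{-19})$, which fails for $\|\be_{I_k}\|_2$ of order $1$. This is the genuine obstacle and the reason the reduction uses \emph{blocks}: I would take $\be_{I_k}$ to be the uniform distribution on block $I_k$, which has $\|\be_{I_k}\|_2 = \ell^{-1/2} = n^{-0.495}$, so $|\be_{I_k}^\top(\cdot)\by| \le O(\sqrt{\log n})(n^{-0.495} + n^{-0.2}) = O(\sqrt{\log n})\,n^{-0.2} \ll b^{-19}$, and then argue that deviating to uniform-on-$I_k$ in the big game is at least as good (for the purpose of lower-bounding $\max$) as the value $(P\hat\by)_k$ delivers in the small game — more precisely, that the best response value in $(A,B)$ against $\by$ is at least $\max_k$ of the payoff to uniform-on-$I_k$, which in turn equals $(P\hat\by)_k$ plus these tiny error terms.

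Assembling: the Nash condition for $(\bx,\by)$ in $(A,B)$ says $\bx^\top A\by \ge \max_i \be_i^\top A\by \ge \max_k \be_{I_k}^\top A\by$. Expanding both sides via the block identity and the error bounds: $\hat\bx^\top P\hat\by + E_1 \ge (P\hat\by)_k + E_2$ for every $k$, where $|E_1|, |E_2| \le O(\eps\sqrt{\log n})\,n^{-0.2}$. Hence $\hat\bx^\top P\hat\by \ge (P\hat\by)_k - O(\sqrt{\log n})\,n^{-0.2}$, and since $n^{-0.2}\log^{1/2} n \le n^{-0.19} = b^{-19}$ for $n$ large, $(\hat\bx,\hat\by)$ is a $b^{-19}$-approximate equilibrium on Alice's side; the symmetric argument handles Bob's side. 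The final step is a union bound over the $O(1)$ many concentration events (one invocation of Lemma~\ref{lem:conc-subgauss} per noise/gadget matrix, plus the norm-bound event), each of which can be taken to fail with probability at most $n^{-2}/\text{const}$ by choosing the slack $u = \Theta(\sqrt{\log n})$, so the total failure probability is $n^{-2}$ as claimed. The main thing to get right is the bookkeeping that lets every error term be measured against a \emph{small}-$\ell_2$-norm vector (the equilibrium strategy on one side, a uniform block indicator on the other), so that the $O(\sqrt{\log n})$ factor from Lemma~\ref{lem:conc-subgauss} is multiplied by something of size $n^{-0.2}$ rather than $1$; the amplification exponent $0.01$ for $b$ versus $0.99$ for $\ell$ is exactly tuned so that $b^{-19} = n^{-0.19}$ sits comfortably above the $n^{-0.2+o(1)}$ error and comfortably above $\ell^{-1/2} = n^{-0.495}$.
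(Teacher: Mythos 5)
Your proposal is correct and follows essentially the same route as the paper: invoke the $\|\bx\|_2,\|\by\|_2 \le n^{-0.2}$ norm bound, use the block identity $\bx^\top(P\otimes J_\ell)\by = \hat\bx^\top P\hat\by$, compare against the \emph{uniform-over-a-block} deviation $\bu_{I_k}$ (whose payoff is dominated by the Nash value of $(\bx,\by)$), and control all noise/gadget terms via Lemma~\ref{lem:conc-subgauss} so that the error is $O(\sqrt{\log n})\,n^{-0.2} \ll b^{-19}=n^{-0.19}$. The paper phrases this as a proof by contradiction rather than a direct derivation, but the decomposition, the key lemma, and the bookkeeping are identical.
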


\begin{proof}
We claim that $(\hat{\bx}, \hat{\by})$ is an $b^{-19} = n^{-0.19}$-approximate equilibrium of $(P, Q)$ with high probability. Assume not, without loss of generality, Alice would benefit from deviating from $\hat{\bx}$. That is, there exists $i \in [b]$ such that
\begin{align}
    \hat{\bx}\trans P \hat{\by} &\le \be_i\trans P \hat{\by} - b^{-19}.\label{eq:not-apx-equ}
\end{align}
Define $\bm u_S$ to be the uniform probability vector on support $S$, then, the above is equivalent to
\begin{align}
\bx\trans (P \otimes J_{\ell}) \by \le \bu_{I_i}\trans (P \otimes J_{\ell}) \by - b^{-19}.\label{eq:not-apx-equ-tensored}
\end{align}
By Lemma~\ref{lem:conc-subgauss}, we may assume that the concentration inequality holds for $\frac{1}{2+\eps}(Z_0 + Z_1 + A_{\eps})$, then we know that
\begin{align}
|\bx\trans (Z_0 + A_{\eps}) \by| &\le O(\sqrt{\log n}\ n^{-0.2})\label{eq:(1)}\\
|\bu_{I_i}\trans (Z_0 + A_{\eps}) \by| &\le O(\sqrt{\log n}\ n^{-0.2})\label{eq:(2)}
\end{align}

Combining Eqs.~\ref{eq:not-apx-equ-tensored},\ref{eq:(1)}, and \ref{eq:(2)} we get
\begin{align}
\bx\trans A \by \le \bu_{I_i}\trans A \by - b^{-19} + O(\sqrt{\log n}\ n^{-0.2}) &< \bu_{I_i}\trans A \by.\label{eq3}
\end{align}
since $b = n^{0.01}$. This contradicts that $(\bx, \by)$ is a Nash equilibrium of $(A, B)$.

By a similar argument, Bob does not wish to deviate with high probability. Therefore, $(\hat{\bx}, \hat{\by})$ is a $b^{-19}$-approximate Nash equilibrium of $(P, Q)$.
\end{proof}

Since finding a $b^{-19}$-approximate Nash equilibrium is PPAD-hard~\cite{chen2009settling} when $P$ and $Q$ have constant sized entries, finding the smoothed equilibrium of $(A, B)$ is PPAD-hard. 
 Since the proofs of Sections~\ref{sec:large-support} and~\ref{sec:small-norm} hold when $X$ is supported on $[-\eps,\eps]$ for $\eps>0$ constant, this is an instance of $X$-SMOOTHED NASH, and therefore concludes the proof of Theorem~\ref{thm:main} when $X$ is a symmetric distribution.

\subsection{General $X$}\label{subsec:general}

Let $X$ be any distribution supported on $[-\eps/2, \eps/2]$. Let $Y := X - X'$ be the distribution on $[-\eps, \eps]$ which takes two i.i.d.~samples from $X$ and subtracts them. Note that $Y$ is a symmetric distribution, so by the previous section we have that $Y$-SMOOTHED NASH is hard. In particular, it is hard to find an equilibrium from the distribution
\begin{align*}
  A &:= P \otimes J_{\ell} + Z_0 + Z_1 + A_{Y}\\
  B &:= Q \otimes J_{\ell} - Z_0 - Z_1 + B_{Y},
\end{align*}
where $A_Y$ and $B_Y$ are matrix whose entries are i.i.d.~samples from $Y$. We can rewrite $A_Y = A_X - A'_{X}$ and $B_Y = B_X - B'_X $, where $A_X, A'_X, B_X, B'_X$ are all i.i.d.~matrix samples from $X$. Thus, the distribution can be rewritten as
\begin{align*}
  A &:= (P \otimes J_{\ell} + Z_0 + Z_1 - A'_{X}) + A_X\\
  B &:= (Q \otimes J_{\ell} - Z_0 - Z_1 - B'_{X}) + B_X,
\end{align*}
This is an instance of $X$-SMOOTHED NASH, and we conclude Theorem~\ref{thm:main} for arbitrary $X$, losing a factor 2 on $\eps$.

\section{Equilibria Have Large Support}\label{sec:large-support}
In this section and the following, we will show the bound on $\Vert \bm x\Vert_2,\, \Vert \bm y\Vert_2$ which was required in the proof of Theorem~\ref{thm:main}.
We first show that the support of the equilibria is large with high probability. Then, in Section~\ref{sec:small-norm}, use this to argue that the weight must be sufficiently spread.
The main result of this section is the following lemma.

\begin{lemma}\label{lem:whp-large-support-uniform}
With probability $1 - n^{-3}$, for every Nash equilibrium $(\bx, \by)$ of $(A, B)$, we have that $\lvert\supp(\bx)\rvert= \lvert\supp(\by)\rvert > n^{0.96}$.
\end{lemma}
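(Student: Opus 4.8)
The plan is to show that if some Nash equilibrium $(\bx,\by)$ of $(A,B)$ has small support — say $|\supp(\bx)| = |\supp(\by)| = k \le n^{0.96}$ — then the relevant submatrices of the random zero-sum part $Z_0$ (restricted to $\supp(\bx)\times\supp(\by)$ and its ``off-support'' extensions) would have to satisfy an event that is exponentially unlikely, and then union-bound over all choices of supports. Recall the standard fact (going back to von Neumann / the structure of equilibria) that if $(\bx,\by)$ is an equilibrium of a game $(A,B)$ then, writing $S = \supp(\bx)$, $T = \supp(\by)$: every pure strategy in $S$ is a best response for Alice against $\by$, and every pure strategy in $T$ is a best response for Bob against $\bx$; moreover no strategy outside $S$ (resp. $T$) does strictly better. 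The point is that the \emph{value} Alice gets, $\bx\trans A\by$, must simultaneously equal $\be_i\trans A \by$ for all $i \in S$, and dominate it for $i \notin S$. Since $A = P\otimes J_\ell + Z_0 + Z_1 + A_\eps$ and $B = Q\otimes J_\ell - Z_0 - Z_1 + B_\eps$, the worst-case matrices contribute entries bounded by $1$ and the smoothing by $\eps$, so the entire ``signal'' part $P\otimes J_\ell + Z_1 + A_\eps$ has operator-type fluctuations controlled by Lemma~\ref{lem:conc-subgauss} (and $\|P\otimes J_\ell\|$ is small against well-normalized vectors because of the block structure); the burden of avoiding small support must therefore fall on $Z_0$.

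The key step is a quantitative ``random zero-sum games have no small-support equilibria'' estimate. Fix $S,T \subseteq [n]$ with $|S|=|T|=k$. Condition on $Z_0$ restricted to $S\times T$; an equilibrium supported there forces (via the equalizing conditions above for both players, using that $B$ carries $-Z_0$) a pair $(\bx,\by)$ supported on $S\times T$ with $\bx\trans Z_0^{00}\by$ pinned down on both sides, which after absorbing the $O(1)$-bounded contributions of $P\otimes J_\ell, Z_1, A_\eps, B_\eps$ and $\eps \le$ const says roughly that the $k\times k$ Rademacher matrix $Z_0^{00}$ has a pair of probability vectors $(\bx,\by)$ for which $\be_i\trans Z_0^{00}\by$ is within $O(1)$ of $\bx\trans Z_0^{00}\by$ for every $i\in S$ and $\bx\trans Z_0^{00}\be_j$ is within $O(1)$ of $\bx\trans Z_0^{00}\by$ for every $j\in T$ (the constants coming from the $\le 2+2\eps$ bound on the other terms). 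The clean way to get a probability bound: by linear-programming duality / the minimax structure, the zero-sum game with payoff $Z_0^{00}$ has a value $v$, and the equilibrium constraints above imply that $v$ lies in an $O(1)$-window while simultaneously $\bx$ guarantees Alice at least $v - O(1)$ and $\by$ holds Bob to at most $-v + O(1)$; feeding in that the true value of a random $k\times k$ $\pm1$ zero-sum game concentrates around $0$ with Gaussian-type tails of width $O(\sqrt{k})$ is \emph{not} enough by itself — instead one wants that the event ``there exist probability vectors $\bx,\by$ on all of $S,T$ that are simultaneously near-optimal AND use the \emph{entire} support'' is rare. The natural route: restrict to, say, the two largest-weight coordinates $i^*\in S$ of $\bx$ and $j^*\in T$ of $\by$ (so $x_{i^*}, y_{j^*}\ge 1/k$), and observe that changing a single coordinate of $\bx$ to put all its mass on a well-chosen $i$ changes Alice's payoff by $\langle \be_i - \bx, (Z_0+\cdots)\by\rangle$; the equilibrium condition says this is $\le 0$ for the best such $i$, i.e. $\max_i \be_i\trans Z_0^{00}\by \le \bx\trans Z_0^{00}\by + O(1)$, and symmetrically $\min_j \bx\trans Z_0^{00}\be_j \ge \bx\trans Z_0^{00}\by - O(1)$, so $\max_i \be_i\trans Z_0^{00}\by - \min_j \bx\trans Z_0^{00}\be_j \le O(1)$. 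Now for fixed $\by$ the quantities $\be_i\trans Z_0^{00}\by$ over $i\in S$ are $k$ independent (over the rows of $Z_0^{00}$) sums of $\pm1$'s weighted by $\by$, each with standard deviation $\|\by\|_2 \ge 1/\sqrt{k}$, so their maximum exceeds their mean by $\Omega(\|\by\|_2\sqrt{\log k})$ with good probability; likewise the minimum over columns against $\bx$ falls below by $\Omega(\|\bx\|_2\sqrt{\log k})$; together these pull the two sides $\Omega((\|\bx\|_2+\|\by\|_2)\sqrt{\log k})$ apart, contradicting the $O(1)$ bound once $k$ is large enough relative to $\|\bx\|_2,\|\by\|_2$. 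One has to be careful that $\bx,\by$ are not fixed but chosen adversarially after seeing $Z_0$; this is handled by an $\eps$-net argument over probability vectors on $S$ and $T$ (a net of size $k^{O(k)}$ suffices for $O(1)$-accuracy in the relevant linear functionals, since all matrices in play have bounded entries), combined with a Chernoff bound over the $k$ rows/columns, and then a union bound over the $\binom{n}{k}^2 \le n^{2k}$ choices of $(S,T)$; as long as $k \le n^{0.96}$ the $n^{2k}$ loss is swamped by the $\exp(-\Omega(k\log k/(\text{poly}\log n)))$-type gain, with room to spare for the $n^{-3}$ target.

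I expect the main obstacle to be the interplay between (i) the adversarial choice of $(\bx,\by)$ after the randomness, which forces a net argument whose cardinality must be carefully balanced against the concentration, and (ii) the fact that the ``window'' we must beat is $O(1)$ in \emph{absolute} terms (the worst-case $P\otimes J_\ell$ plus smoothing contributes a constant to Alice's and Bob's payoffs), so the $\sqrt{\log k}$-type gap from the maximum of $k$ subgaussians is only enough once we know $\|\bx\|_2$ or $\|\by\|_2$ is not too small — but $\|\bx\|_2 \ge 1/\sqrt{k}$ exactly when $|\supp(\bx)| \le k$, so we are in fact in the favorable regime and the bound reads $\sqrt{(\log k)/k} \cdot \sqrt{k} \cdot$ (something), which must be made to dominate the $O(1)$ window; getting the polynomial exponents ($k \le n^{0.96}$ versus the $\sqrt{\log n}\, n^{-0.2}$ error from Lemma~\ref{lem:conc-subgauss} and the $n^{2k}$ union bound) to line up is the delicate accounting. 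A secondary subtlety is that we need the statement to hold \emph{simultaneously} for every equilibrium, not just one; but this is automatic once the bad event is defined as ``$\exists S,T$ of size $\le n^{0.96}$ and $\exists$ probability vectors on them violating the gap'', since that event's probability is what we bound.
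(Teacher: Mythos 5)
There is a genuine gap, and it is exactly at the point you flag as delicate: the window you must beat is $O(1)$ in absolute terms, and the anti-concentration available from $Z_0$ against a spread-out small-support strategy cannot reach it. If $\by$ is, say, uniform on $k = n^{0.96}$ coordinates, then $\|\by\|_2 = k^{-1/2}$ and every row satisfies $\Pr[\be_i\trans Z_0 \by \geq t] \leq \exp(-t^2 k/2)$ by Hoeffding, so the maximum over all $n$ rows of $\be_i\trans Z_0\by$ is $O(\sqrt{(\log n)/k}) = o(1)$ with high probability. No choice of deviation can then recover a constant-size loss coming from the worst-case part $P \otimes J_\ell$ (which is \emph{not} random and is not controlled by Lemma~\ref{lem:conc-subgauss}; the difference $\be_i\trans (P\otimes J_\ell)\by - \bx\trans(P\otimes J_\ell)\by$ is genuinely $\Theta(1)$ for an arbitrary $i$). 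Your ``favorable regime'' arithmetic $\sqrt{(\log k)/k}\cdot\sqrt{k}$ conflates the maximum of $k$ independent subgaussians (which exceeds the mean by $\sigma\sqrt{\log k}$, i.e.\ $\sqrt{(\log k)/k}$ here) with something scaling like $\sigma \sqrt{k}$; the actual gap $\Omega\bigl((\|\bx\|_2+\|\by\|_2)\sqrt{\log k}\bigr)$ is $o(1)$ precisely in the regime $k = n^{\Theta(1)}$ that the lemma must handle.

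The paper closes this in two ways that your proposal is missing. First, the worst-case part is neutralized \emph{exactly} rather than bounded: one picks a pure block best-response $i_0 \in I_{i'}$ for the tensored game and deviates to a fresh index $i \in I_{i'}\setminus S$ in the \emph{same block}, so $\be_i\trans(P\otimes J_\ell)\by = \be_{i_0}\trans(P\otimes J_\ell)\by \geq \bx\trans(P\otimes J_\ell)\by$ and the window shrinks from $O(1)$ to the smoothing contribution, which Claim~\ref{claim:uniform-conc} bounds by $O(\eps(\beta(\bx)+\beta(\by)))$ in terms of the benchmark $\beta(\cdot) = \sqrt{\log n}\|\cdot_{\den}\|_2 + \|\cdot_{\spa}\|_1$. (This is also why the threshold is $\ell/10$: one needs $|S| \ll \ell$ so that the best block still contains $9\ell/10$ off-support indices.) Second, the matching gain is not a max-of-subgaussians estimate but the tailored anti-concentration Lemma~\ref{lem:rad-anti}: a single fresh Rademacher row exceeds $c\,\beta(\by)$ with probability $n^{-0.001}$, so among $9\ell/10$ independent rows one succeeds except with probability $\exp(-\Omega(n^{0.98}))$, and taking $\eps < 2c/C$ makes the gain dominate the loss \emph{at the same scale} $\beta$. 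Finally, your $\eps$-net over strategies is both problematic (the accuracy needed is $c\beta$, which can be polynomially small, not $O(1)$) and unnecessary: Proposition~\ref{prop:same-support} shows the equilibrium with supports $S,T$ is a.s.\ unique and determined by $A^{00},B^{00}$, so the off-support rows and columns are independent of $(\bx,\by)$ and one only union-bounds over the $n^{2n^{0.96}}$ support pairs.
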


We prove this result using methods partially inspired by~\cite{jonasson2004optimal}. Observe that a Nash equilibrium of $(A, B)$ requires that
\begin{align}
    \bx\trans A\by &\ge \be_i\trans A\by &&\text{for all }i\in [n]\nonumber\\
    \bx\trans B\by &\ge \bx\trans B\be_j &&\text{for all }j\in [n]\nonumber\\
    \implies \bx\trans (A + B)\by &\ge \be_i\trans A\by + \bx\trans B \be_j &&\text{for all }\label{eq:123} i, j \in [n].
\end{align}

We seek to show that Eq.~\ref{eq:123} cannot hold when the support $\bx, \by$ is sufficiently small.\footnote{In the case of~\cite{jonasson2004optimal}, which considers zero-sum games, the LHS of (\ref{eq:123}) is equal to $0$, so it suffices to bound the probability that the RHS is positive for \emph{some} $i$ and $j$.} To do that, we propose a ``benchmark'' to which both the LHS and the maximum value of the RHS of Eq.~\ref{eq:123} are comparable to. To define this benchmark, we begin by introducing a notion of {\em robust partition} of the strategy vectors.
Consider $\bx \in \mathbb R^n$ such that $\|\bx\|_1 = 1$. 
Let $L = \lceil \log_2 n\rceil / 2^{100}$. Let $D = 2^{2^{500}}$. Let $E_1, \hdots, E_{L}$ be intervals such that $E_i = (D^{-i}, D^{-(i-1)}]$ for all $i < L$ and $E_L = [0, D^{-(L-1)}]$. Let $\bx = \bx^{(1)} + \cdots + \bx^{(L)}$ such that
  \[
    \bx^{(i)}_j = \begin{cases}
      \bx_j & \bx_j \in E_i\\
      0 & \text{otherwise}
      \end{cases}
  \]

  We say that $\bx^{(i)}$ is \emph{sparse} if it has at most $L$ nonzero coordinates; otherwise we say $\bx^{(i)}$ is \emph{dense}. Let $\bx_{\spa}$ be the sum of the sparse $\bx^{(i)}$'s and $\bx_{\den}$ be the sum of the dense ones. Note that $\bx = \bx_{\spa} + \bx_{\den}$. Now define the following quantity
  \[
    \beta(\bx) = \sqrt{\log n}\|\bx_{\den}\|_2 + \|\bx_{\spa}\|_1.
  \]

  We call $\beta(\bx)$ the \emph{benchmark} for $\bx$. This quantity will appear in a number of concentration/anti-concentration inequalities. First, we show a key anticoncentration inequality concerning this robust partition.

\begin{lemma}\label{lem:rad-anti}
  Assume that $X$ is the uniform distribution on $\{-1, 1\}$ (i.e., the Rademacher distribution).  There exists a universal constant $c > 0$ with the following property: For all $\bx \in \mathbb R^n$ such that $\|\bx\|_1 = 1$, with probability at least $n^{-0.001}$ over $\bv \sim X^n$ 
  \[
    \langle \bv, \bx\rangle \ge c \beta(\bx).
  \]
\end{lemma}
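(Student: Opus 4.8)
Since the vectors $\bx^{(1)},\dots,\bx^{(L)}$ have pairwise disjoint supports, the random variables $\langle\bv,\bx^{(i)}\rangle$ are mutually independent, and in particular $\langle\bv,\bx_{\den}\rangle$ and $\langle\bv,\bx_{\spa}\rangle$ are independent. The plan is to prove separately, for a sufficiently small absolute constant $c>0$,
\[
\Pr\bigl[\langle\bv,\bx_{\den}\rangle\ge c\sqrt{\log n}\,\|\bx_{\den}\|_2\bigr]\ \ge\ n^{-0.0004}
\qquad\text{and}\qquad
\Pr\bigl[\langle\bv,\bx_{\spa}\rangle\ge c\|\bx_{\spa}\|_1\bigr]\ \ge\ n^{-0.0004},
\]
and then multiply the two probabilities, using independence, to conclude $\Pr[\langle\bv,\bx\rangle\ge c\beta(\bx)]\ge n^{-0.0008}\ge n^{-0.001}$.

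\textbf{The dense part.} The main tool is a ``reverse Chernoff'' bound for Rademacher sums: there are absolute constants $c_1,c_2>0$ such that for every $\bw\in\R^n$ and every $t$ with $1\le t\le\|\bw\|_2/\|\bw\|_\infty$ one has $\Pr[\langle\bv,\bw\rangle\ge t\|\bw\|_2]\ge c_1\exp(-c_2t^2)$. This interpolates between the central-limit regime and the event that all signs of $\bv$ align with $\bw$, and can be derived from Stirling's formula after grouping the coordinates of $\bw$ into dyadic magnitude classes; the upper restriction $t\le\|\bw\|_2/\|\bw\|_\infty$ is precisely what makes the target value $t\|\bw\|_2$ attainable. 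I apply this with $\bw=\bx_{\den}$ (if $\bx_{\den}=0$ there is nothing to prove here). Because consecutive magnitude classes $E_i$ are separated by the enormous factor $D$, if $i_1$ is the coarsest dense level then $\|\bx_{\den}\|_\infty\le D^{-(i_1-1)}$, while $\|\bx_{\den}\|_2\ge\|\bx^{(i_1)}\|_2>\sqrt L\,D^{-i_1}$ since a dense level has more than $L$ nonzero coordinates; hence $\|\bx_{\den}\|_2/\|\bx_{\den}\|_\infty>\sqrt L/D$. Since $L=\Theta(\log n)$, taking $c$ small enough in terms of $D$ lets us choose $t=c\sqrt{\log n}\le\sqrt L/D$, and the bound yields $\langle\bv,\bx_{\den}\rangle\ge c\sqrt{\log n}\,\|\bx_{\den}\|_2$ with probability at least $c_1 n^{-c_2c^2}\ge n^{-0.0004}$ once $c$ is small.

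\textbf{The sparse part.} Order the nonempty sparse levels by index, $j_1<j_2<\cdots$. Each has at most $L$ coordinates, and $\|\bx^{(j_k)}\|_1\le L\,D^{-(j_k-1)}$, so the masses decay geometrically along the tail: $\sum_{k>k_0}\|\bx^{(j_k)}\|_1\le 2LD\cdot D^{-k_0}\,\|\bx^{(j_1)}\|_1\le 2LD\cdot D^{-k_0}\,\|\bx_{\spa}\|_1$. Choosing $k_0$ with $D^{k_0}\ge 6LD$ (so $k_0=O(\log_D L)$, which stays within the range for which the constants $D,L$ were calibrated), the top $k_0$ sparse levels carry at least $\tfrac23\|\bx_{\spa}\|_1$ and together have at most $k_0L$ nonzero coordinates. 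With probability $2^{-k_0L}\ge n^{-0.0004}$ the entries of $\bv$ on the union of their supports all match the signs of $\bx$; on this event those levels contribute exactly $\sum_{k\le k_0}\|\bx^{(j_k)}\|_1\ge\tfrac23\|\bx_{\spa}\|_1$, while the remaining sparse levels subtract at most their total mass $\le\tfrac13\|\bx_{\spa}\|_1$, so $\langle\bv,\bx_{\spa}\rangle\ge\tfrac13\|\bx_{\spa}\|_1\ge c\|\bx_{\spa}\|_1$. This event depends only on the coordinates of $\bv$ in $\supp(\bx_{\spa})$, hence is independent of the dense event, and multiplying the two probabilities completes the argument.

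\textbf{Main obstacle.} The delicate point is that the two terms of $\beta(\bx)$ have genuinely different shapes: the dense term is $\sqrt{\log n}$ times the $\ell_2$-norm, which is essentially the largest deviation a sum of $\Theta(\log n)$ comparable Rademacher terms can produce, so one must verify that the reverse-Chernoff inequality holds uniformly all the way out to $t\asymp\sqrt{\log n}$ standard deviations (not merely in the $t=o(m^{1/6})$ central-limit window); this is exactly where the bounded dynamic range within a level—the factor-$D$ flatness guaranteed by the robust partition—is used. The second thing to check is that the choices $D=2^{2^{500}}$ and $L=\lceil\log_2 n\rceil/2^{100}$ make both the sparse sign-alignment cost $2^{-O(L)}$ and the number $O(\log_D L)$ of heavy sparse levels small enough against the target exponent $0.001$; shrinking the constant $c$ absorbs the remaining slack and unifies the ``thin level'' (brute-force sign alignment) and ``fat level'' (Gaussian-type deviation) regimes.
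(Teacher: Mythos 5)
Your plan follows essentially the same route as the paper: split $\bx$ into $\bx_{\spa}+\bx_{\den}$, handle the sparse part by brute-force sign alignment on the heaviest coordinates (the geometric decay across levels makes the remaining sparse mass harmless), handle the dense part by a lower tail ("reverse Chernoff") bound, and combine using the disjointness of supports. Your sparse argument is correct, and your way of combining the two halves (prove each lower bound with probability $n^{-0.0004}$ and multiply) is in fact a bit cleaner than the paper's, since it bounds $\langle\bv,\bx\rangle$ below by $c$ times the \emph{sum} of the two terms of $\beta$ directly.

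There are two issues in the dense part. The first is a genuine gap: your lower bound $\|\bx_{\den}\|_2/\|\bx_{\den}\|_\infty>\sqrt L/D$ uses that every coordinate of a dense level $E_{i}$ exceeds $D^{-i}$, which is false for the last level $E_L=[0,D^{-(L-1)}]$. If the only dense level is $E_L$ and it contains one coordinate of size $\approx D^{-(L-1)}$ together with $2L$ coordinates of size, say, $2^{-n}$, then $\|\bx_{\den}\|_1/\|\bx_{\den}\|_2=O(1)$ and the target $c\sqrt{\log n}\,\|\bx_{\den}\|_2$ exceeds the maximum attainable value $\|\bx_{\den}\|_1$, so the claimed dense event has probability zero. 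The paper repairs exactly this by first discarding $\bx^{(L)}$, whose $\ell_1$ mass is at most $nD^{-(L-1)}\approx n^{-2^{400}}$ and hence changes $\beta$ negligibly (note $\beta(\bx)\gtrsim n^{-1/2}$ since $\|\bx\|_1=1$); you should do the same. The second issue is that the weighted reverse-Chernoff bound $\Pr[\langle\bv,\bw\rangle\ge t\|\bw\|_2]\ge c_1e^{-c_2t^2}$ for $t\le\|\bw\|_2/\|\bw\|_\infty$ is asserted, not proved; it is true (it is essentially Montgomery-Smith's lower bound for Rademacher sums), but it is the entire technical content of the dense part. The dyadic-grouping derivation you sketch is precisely what the paper carries out: within each level the weights are within a factor $D$ of each other, so Erd\H{o}s's comparison theorem (Theorem~\ref{thm:erdos-lo}) reduces to an unweighted binomial lower tail (Claim~\ref{claim:binom-sum}); the per-level targets are allocated proportionally to $\|\bx^{(i)}\|_2^2/\|\bx_{\den}\|_2^2$, and the multiplicative constant losses ($10^{-4}$ per level, over at most $L$ levels, each of support size $m_i\ge L$) are controlled by the choice of $L$. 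If you intend to cite the general bound you should give a reference or a proof; otherwise your plan collapses into the paper's level-by-level argument.
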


The proof of the above lemma is deferred to the Appendix.
The following concentration bound will also be of use.
For any distribution $X$, we let $X^{n\times n}$ denote the distribution of $n\times n$ matrices with entries i.i.d.~samples from $X$.

\begin{claim}\label{claim:uniform-conc}
  Let $X$ be any distribution on $[-1, 1]$. There exists a universal constant $C > 0$ such that for all $n \ge 0$, with probability $1- 1/n^4$ over $M \sim X^{n \times n}$, for all $\bx, \by \in \mathbb R^n$ such that $\|\bx\|_1 = \|\by\|_1 = 1$, we have that
  \[
    |\bx\trans M \by| \le C\cdot (\beta(\bx) + \beta(\by)).
  \]
\end{claim}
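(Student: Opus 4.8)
\textbf{Proof plan for Claim~\ref{claim:uniform-conc}.}

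The plan is to prove the bound by a union bound over a net, after first reducing to vectors whose coordinates are ``structured'' according to the robust partition. The key observation is that the benchmark $\beta(\bx)$ is designed precisely so that each dyadic layer $\bx^{(i)}$ either has few nonzeros (sparse, contributing through its $\ell_1$ mass) or many nonzeros of comparable size (dense, contributing through $\sqrt{\log n}\,\|\bx^{(i)}\|_2$, which is the scale at which Lemma~\ref{lem:conc-subgauss}-type bounds are tight). So I would first fix $\bx,\by$ with $\|\bx\|_1=\|\by\|_1=1$, write $\bx = \bx_\spa + \bx_\den$ and $\by = \by_\spa + \by_\den$, and expand $\bx\trans M \by$ into the four cross terms. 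For the term $\bx_\den\trans M \by_\den$, apply Lemma~\ref{lem:conc-subgauss} (entries of $M$ are in $[-1,1]$, hence subgaussian with variance proxy $\le 1/4 \le 1$) to the unit vectors $\bx_\den/\|\bx_\den\|_2$ and $\by_\den/\|\by_\den\|_2$, giving a bound of $O(\sqrt{\log n})(\|\bx_\den\|_2 \|\by_\den\|_1 + \|\by_\den\|_2 \|\bx_\den\|_1) \le O(\sqrt{\log n})(\|\bx_\den\|_2 + \|\by_\den\|_2)$ using $\|\cdot\|_1 \le 1$; this is $O(\beta(\bx)+\beta(\by))$. For the sparse-sparse term, $|\bx_\spa\trans M \by_\spa| \le \|M\|_\infty \|\bx_\spa\|_1 \|\by_\spa\|_1 \le \|\bx_\spa\|_1 + \|\by_\spa\|_1$ deterministically. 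The two mixed terms (sparse-dense) are handled by combining the two estimates, e.g.\ $|\bx_\spa\trans M \by_\den| \le \|\bx_\spa\|_1 \max_i |\be_i\trans M \by_\den| \le \|\bx_\spa\|_1 \cdot O(\sqrt{\log n})\|\by_\den\|_2$ by applying the concentration bound to the $n$ fixed directions $\be_i$, again $O(\beta(\bx)+\beta(\by))$.

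The subtlety is that Lemma~\ref{lem:conc-subgauss} only gives a bound ``for all unit $\bx,\by$ simultaneously'' with probability $1-\exp(-u^2)$, so I should invoke it once with $u = \Theta(\sqrt{\log n})$ to get failure probability $\le 1/(2n^4)$, and this single high-probability event simultaneously controls $\bx_\den\trans M \by_\den$ and all the $\be_i\trans M (\cdot)$ terms for \emph{every} choice of $\bx,\by$ (since $\bx_\den, \by_\den$ are just particular unit-normalized vectors and $\be_i$ is a fixed net). So in fact no separate net argument over $\bx,\by$ is needed at all — the quantifier over all $\bx,\by$ is already inside Lemma~\ref{lem:conc-subgauss}. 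The sparse-sparse bound is deterministic. Thus the whole claim follows from a single application of Lemma~\ref{lem:conc-subgauss} plus elementary decomposition, and the constant $C$ absorbs the $O(\cdot)$ and the $u = \Theta(\sqrt{\log n})$.

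The main thing to be careful about is the bookkeeping that $\beta(\bx)$ dominates each of the pieces: specifically that $\|\bx_\den\|_2 \le \beta(\bx)/\sqrt{\log n} \le \beta(\bx)$, that $\|\bx_\spa\|_1 \le \beta(\bx)$, and that these also bound the mixed terms after using $\|\bx\|_1 = 1$ so that $\|\bx_\spa\|_1, \|\by_\spa\|_1 \le 1$. One should also double-check that $\|\bx_\den\|_2$ is really the relevant norm output by Lemma~\ref{lem:conc-subgauss}'s corollary form (the ``$\ell_1 \le 1 \Rightarrow$ bound in terms of $\ell_2$'' direction), and that applying it to $\bx_\den$ rather than a unit vector is legitimate — it is, since the corollary statement already ranges over all vectors with $\ell_1$-norm at most $1$, and $\|\bx_\den\|_1 \le \|\bx\|_1 = 1$. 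I expect the only real ``obstacle'' to be organizing the four-term split cleanly enough that each term is visibly $O(\beta(\bx)+\beta(\by))$; there is no deep difficulty, which is consistent with this being a routine claim supporting the harder anticoncentration lemma.
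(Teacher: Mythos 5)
Your overall strategy is the same as the paper's: one application of Lemma~\ref{lem:conc-subgauss} with $u=\Theta(\sqrt{\log n})$, a sparse/dense decomposition, a deterministic bound for the sparse$\times$sparse piece, and the corollary of Lemma~\ref{lem:conc-subgauss} for the dense$\times$dense piece. However, your treatment of the mixed terms contains a step that fails. You claim $\max_i |\be_i\trans M \by_{\den}| \le O(\sqrt{\log n})\,\|\by_{\den}\|_2$, but the corollary of Lemma~\ref{lem:conc-subgauss} applied to the pair $(\be_i, \by_{\den})$ gives $O(\sqrt{\log n})(\|\be_i\|_2 + \|\by_{\den}\|_2)$ with $\|\be_i\|_2 = 1$; you have silently dropped the $1$. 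The inequality without that term is genuinely false: for Rademacher $M$, take $\by$ uniform on the set $S=\{j : M_{1j}=1\}$ (so $\by=\by_{\den}$, $\|\by\|_1=1$, $\|\by_{\den}\|_2\approx\sqrt{2/n}$), and then $\be_1\trans M\by_{\den}=1 \gg O(\sqrt{\log n})\|\by_{\den}\|_2$. If you instead carry the correct corollary bound through, the mixed term becomes $\|\bx_{\spa}\|_1\cdot O(\sqrt{\log n})(1+\|\by_{\den}\|_2)$, whose leading piece $O(\sqrt{\log n})\,\|\bx_{\spa}\|_1$ is \emph{not} $O(\beta(\bx)+\beta(\by))$ (take $\bx$ entirely sparse with $\|\bx_{\spa}\|_1=1$), so your route does not close.

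The repair is immediate and is exactly what the paper does: bound the mixed terms deterministically, with no randomness at all, using only that the entries of $M$ lie in $[-1,1]$. Since $\|\by_{\den}\|_1 \le \|\by\|_1 = 1$, one has $|\bx_{\spa}\trans M \by_{\den}| \le \|\bx_{\spa}\|_1 \|\by_{\den}\|_1 \le \|\bx_{\spa}\|_1 \le \beta(\bx)$, and symmetrically for the other mixed term --- the same elementary argument you already use for the sparse$\times$sparse piece. (The paper organizes this as a three-term split $\bx\trans M\by = \bx\trans M\by_{\spa} + \bx_{\spa}\trans M\by_{\den} + \bx_{\den}\trans M\by_{\den}$, bounding the first two terms by $\|\by_{\spa}\|_1$ and $\|\bx_{\spa}\|_1$ respectively.) With that substitution your argument matches the paper's proof; your other observations --- that no separate net over $(\bx,\by)$ is needed because the uniformity is already built into Lemma~\ref{lem:conc-subgauss}, and that the corollary applies to $\bx_{\den},\by_{\den}$ directly since their $\ell_1$ norms are at most $1$ --- are correct.
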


\begin{proof}
  Apply Lemma~\ref{lem:conc-subgauss} to $M$ with $u = \sqrt{3\log n}$. Then, there is a universal constant $C'$ such that with probability $1 - 1/n^3$, for all $\bx, \by$ with $\ell_1$ norm $1$, 
\[
  |\bx_{\den}\trans M \by_{\den}| \le C'\sqrt{\log n} (\|\bx_{\den}\|_2 + \|\by_{\den}\|_2).
\]
Thus, since the entries of $M$ have absolute value at most $1$,
\begin{align*}
  |\bx \trans M \by| &\le |\bx \trans M \by_{\spa}| + |\bx_{\spa} \trans M \by_{\den}| + |\bx_{\den}\trans M \by_{\den}|\\
                     &\le \|\by_{\spa}\|_1 + \|\bx_{\spa}\|_1 + C'\sqrt{\log n} (\|\bx_{\den}\|_2 + \|\by_{\den}\|_2)\\
                     &\le \max(C', 1) (\beta(\bx) + \beta(\by)).
\end{align*}

Thus, we can set $C = \max(C', 1)$.
\end{proof}

These lemmas will allow us to prove Lemma~\ref{lem:whp-large-support-uniform}. We present first the following facts about equilibria in random games.

\begin{proposition}\label{prop:same-support}
With probability $1$, for nonempty $S, T \subset [n]$ there is at most one Nash equilibrium $(\bx, \by)$ of $(A, B)$ with $S = \supp(x)$ and $T = \supp(y)$. Further, with probability $1$ all such equilibria have $|S| = |T|$.
\end{proposition}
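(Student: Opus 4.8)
The plan is to reduce the statement to the classical fact that \emph{generic} bimatrix games are non-degenerate, after checking that $(A,B)$ carries enough continuous randomness for the genericity arguments to apply. The only probabilistic input we need is that $A$ and $B$ each have a density: since $Z_1$ has i.i.d.\ entries with an absolutely continuous law and is independent of $P\otimes J_\ell$, $Z_0$, $A_\eps$ and $B_\eps$, conditioning on everything except $Z_1$ makes the entries of $A$ (and, using $-Z_1$, of $B$) independent shifted‑uniform random variables; in particular their joint law is absolutely continuous. Hence for any fixed finite set $I$ of index pairs and any polynomial $p$ that is not identically zero, $\Pr[\,p((A_{ij})_{(i,j)\in I})=0\,]=0$, and likewise for $B$. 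Since there are fewer than $4^n$ pairs of candidate supports $(S,T)\subseteq[n]\times[n]$, it suffices to establish each claim for one fixed pair and take a union bound.

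First I would show that almost surely every equilibrium has $|\supp(\bx)|=|\supp(\by)|$. Fix nonempty $S$ with $|S|=k\le n-1$, fix $T$ with $|T|=k+1$ and some $j_0\in T$, and consider the $k$ vectors $\bm w_j:=(B_{ij}-B_{ij_0})_{i\in S}\in\R^S$, $j\in T\setminus\{j_0\}$. The determinant of the $k\times k$ matrix with columns $\bm w_j$ is a polynomial in $\{B_{ij}:i\in S,\,j\in T\}$ that is not identically zero (setting $B_{ij_0}=0$ it becomes the determinant of a generic $k\times k$ matrix), so a.s.\ the $\bm w_j$ span $\R^S$. If some mixed strategy $\bx$ supported within $S$ has every pure strategy in $T$ as a best response against $B$, then $\sum_{i\in S}x_i(B_{ij}-B_{ij_0})=0$ for all $j\in T\setminus\{j_0\}$, i.e.\ the nonzero vector $(x_i)_{i\in S}$ is orthogonal to all $\bm w_j$ --- impossible once they span $\R^S$. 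A union bound over all such $(S,T)$, together with the mirror statement for $A$ and the row player, gives: a.s., for every equilibrium $(\bx,\by)$, the number of pure best responses against $\bx$ is at most $|\supp(\bx)|$, and symmetrically. Since $\by$ is supported on best responses against $\bx$, this yields $|\supp(\by)|\le|\supp(\bx)|$, and by symmetry equality. In particular, when $|S|\ne|T|$ there is a.s.\ no equilibrium with supports $(S,T)$, so ``at most one'' is automatic in that case.

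Next I would prove uniqueness when $|S|=|T|=k$. If $(\bx,\by)$ is an equilibrium with $\supp(\bx)=S$ and $\supp(\by)=T$, then every $i\in S$ is a best response against $\by$; fixing $i_0\in S$, the vector $(y_t)_{t\in T}$ therefore solves the $k\times k$ linear system consisting of $\sum_{t\in T}(A_{it}-A_{i_0t})\,y_t=0$ for $i\in S\setminus\{i_0\}$ together with $\sum_{t\in T}y_t=1$. The coefficient matrix of this system has determinant a polynomial in $\{A_{it}:i\in S,\,t\in T\}$ that is not identically zero (take the $k-1$ difference rows to be distinct standard basis vectors of $\R^T$; together with the all‑ones row they are independent), so a.s.\ the solution is unique, and hence $\by$ is determined by $(S,T)$. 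Symmetrically $(x_i)_{i\in S}$ is the unique solution of the analogous system built from $B$, so $\bx$ is determined. A union bound over the finitely many pairs $(S,T)$ completes the proof.

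I expect the main obstacle to be purely bookkeeping: in each application I must exhibit one favorable instantiation of the free entries witnessing that the relevant determinant polynomial is not identically zero, and I must make sure the absolute‑continuity step is applied to the correct block of entries and is not spoiled by the facts that $A$ and $B$ share the $Z_0,Z_1$ randomness or that $A_\eps,B_\eps$ may be atomic (it is not, since only the marginal law of the relevant entries is used). None of this is deep.
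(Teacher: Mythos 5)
Your proposal is correct and follows essentially the same route as the paper: both arguments exploit that the entries have a continuous (here, conditionally on everything but $Z_1$, absolutely continuous) law so that the best-response linear systems indexed by candidate supports are generically nondegenerate, and then union bound over the finitely many pairs $(S,T)$. The only difference is organizational --- you verify $|S|=|T|$ and uniqueness via two separate determinant computations, whereas the paper reads both off at once from the generic nullspace dimension $\max(|T|-|S|+1,0)$ of the system $(\be_i-\be_{i_0})\trans A^{00}\by=0$ --- and your write-up is somewhat more explicit about why the relevant polynomials are not identically zero.
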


\begin{proof}
Fix nonempty $S, T \subset [n]$. Fix $i_0 \in S$. Assume without loss of generality that $|S| \ge |T|$.
Denote $A^{00}$ as the sub-matrix of $A$ restricted to rows indexed by $S$ and columns indexed by $T$. For any equilibrium $(\bx, \by)$ with supports $S$ and $T$, we have that $\bx\trans A \by = \be_i\trans A^{00} \by$ for all $i \in S$, when treating $\bx$ and $\by$ as $|S|$- and $|T|$-dimensional vectors, respectively. Therefore, 
\begin{align}(\be_i - \be_{i_0})\trans A^{00} \by = 0 \text{ for all $i \in S \setminus \{i_0\}$}.\label{eq:space}\end{align} Since all the entries of $A^{00}$ are drawn independently from a continuous distribution, the null space of the linear system (\ref{eq:space}) has dimension $\max(|T|-|S|+1, 0) \le 1$ with probability $1$. Since $\by \neq 0$ the null space must have dimension exactly $1$. Thus, $|T| - |S| + 1 \ge 1$, which implies that $|S| = |T|$ and the solution $\by$ is unique, as there can be at most one vector in a $1$-dimensional subspace with coordinates summing to one. By a similar argument $\bx$ is also unique.  

Since there are only finitely many choices of $S$ and $T$, with probability $1$ the proposition holds for all Nash equilibria simultaneously.
\end{proof}

With probability $1$, all equilibria of $A$ and $B$ will have the same support size, and further, for every pair of possible supports $S \subset [n]$ and $T \subset [n]$ there is at most one equilibrium. We let $\bx, \by \in \mathbb R^n$ denote the probability distributions of strategies in this equilibrium.

We can now prove Lemma~\ref{lem:whp-large-support-uniform}

\begin{proof}[Proof of Lemma~\ref{lem:whp-large-support-uniform}.]
Assume (which happens with probability $1 - n^{-4}$) that the event described in Claim~\ref{claim:uniform-conc} occurs for $M = \frac{1}{2\eps} (A_{\eps} + B_{\eps})$. Fix $S, T \subset [n]$ with $|S|, |T| < \ell/10$. We seek to show that with probability at most $2^{-\ell}$, $S$ and $T$ can be the support of a Nash equilibrium. By Proposition~\ref{prop:same-support}, we can assume that $|S| = |T|$.

Also by Proposition~\ref{prop:same-support}, with probability $1$, there is at most one equilibrium $(\bx, \by)$ on the game $(A^{00}, B^{00})$ with full support. Note that $\bx$ and $\by$, if they exist, are independent of the entries of $A$ and $B$ outside of $S \times T$. As mentioned earlier in the section, in order for the equilibrium to extend, the Ineq.~\ref{eq:123} must hold:

\begin{align*}
    \bx\trans (A + B)\by \ge \be_i\trans A\by + \bx\trans B \be_j&&\text{for all } i, j \in [n].
\end{align*}

Say that $i \in [n] \setminus S$ is \emph{$S$-good} if $\be_i\trans (Z_0 + Z_1 + A_{\eps}) \by > c \beta(\by)$. By Lemma~\ref{lem:rad-anti}, we know that $\be_i\trans Z_1 \by > c\beta(\by)$ with probability at least $n^{-0.001}$. Independently, we have that $\be_i\trans (Z_0 + A_{\eps})\by \ge 0$ with probability at least $1/2$ (since $Z_0 +A_\eps$ is a mean-zero matrix distribution). Therefore, both this event happens with probability at least $n^{-0.001} / 2 \ge n^{-0.01}$.

Likewise, say that $j \in [n] \setminus T$ is \emph{$T$-good} if $\bx\trans (-Z_0 -Z_1 - B_{\eps}) \be_j > c \beta(\bx)$. By the same argument, this also happens with probability at least $n^{-0.01}$ . Furthermore, the $S$-good events and $T$-good events are independent of each other because each event is based on a disjoint subset of entriesZ from $Z_0$ and $Z_1$.

Since $\bx$ and $\by$ are probability distributions, there exists $i_0 \in S$ and $j_0 \in T$ such that $\be_{i_0}\trans (P \otimes J_{\ell})\by \ge \bx\trans (P\otimes J_{\ell})\by$ and $\bx\trans (Q\otimes J_{\ell})\be_{j_0} \ge \bx\trans (Q \otimes J_{\ell})\by$. Let $i', j' \in [b]$ be the indices of the blocks such that $i_0 \in I_{i'}$ and $j_0 \in I_{j'}$. Since we assume that $|S|, |T| \le \ell/10$, we have that $I_{i'} \setminus S$ and $I_{j'} \setminus T$ both have size at least $9\ell/10$.

Now, for any good $i \in I_{i'} \setminus S$ and good $j \in I_{j'} \setminus T$, we have
\begin{align*}
    \bx\trans (A + B)\by &= \bx\trans (P \otimes J_{\ell}) \by + \bx\trans (Q \otimes J_{\ell})\by + \bx\trans (A_{\eps} + B_{\eps})\by\\
    &\le \be_{i_0}\trans (P \otimes J_{\ell})\by + \bx\trans (Q \otimes J_{\ell})\be_{j_0} + 2C \eps(\beta(\bx) + \beta(\by))\\
    &= \be_{i}\trans (P \otimes J_{\ell})\by + \bx\trans (Q \otimes J_{\ell})\be_{j} + 2C \eps (\beta(\bx) + \beta(\by))\\
    &< \be_{i}\trans (P \otimes J_{\ell})\by + \bx\trans (Q \otimes J_{\ell})\be_{j} + c(\beta(\bx) + \beta(\by))\qquad\qquad\qquad\qquad\text{        ($\eps < 2c / C$)}\\
    &< \be_{i}\trans (P \otimes J_{\ell} + Z_0 + Z_1 + A_{\eps})\by + \bx\trans (Q \otimes J_{\ell} - Z_0 - Z_1 + B_{\eps})\be_{j}\\
    &= \be_{i}\trans A\by + \bx\trans B\by,
\end{align*}
which contradicts Ineq.~\ref{eq:123}. Thus, there must either be no good $i \in I_{i'} \setminus S$ or there is no good $j \in I_{j'} \setminus T$. This happens with probability at most
\[
2\left(1-n^{-0.01}\right)^{9\ell/10} \le 2e^{-(0.9)\ell/n^{0.01}} \le e^{-n^{0.97}},
\]
where we use in the last inequality that $n$ is sufficiently large. The number of pairs $S, T$ with support at most $n^{0.96}$ is at most
\[
\binom{n}{\le n^{0.96}}^2 \le n^{2n^{0.96}}. 
\]

Note that for $n$ sufficiently large, $n^{2n^{0.96}} e^{-n^{0.97}} \ll n^{-4}$. Thus, all equilibria have support size greater than $n^{0.96}$ with probability at least $1 - 2n^{-4}\ge 1-n^{-3}$.
\end{proof}

\section{Equilibria Have Small $\ell_2$ norm}\label{sec:small-norm}
Towards showing the missing bound in the proof of Theorem~\ref{thm:main}, the previous section showed that with high probability, any equilibrium must have polynomially large support. 
We complete here the proof of the norm bound, which in turn completes the proof of Theorem~\ref{thm:main}.

\begin{lemma}%
\label{lem:whp-small-2-norm-unif}
With probability $1 - 20n^{-3}$, for every Nash equilibrium $(\bx, \by)$ of $(A, B)$, we have that $\|\bx\|_2, \|\by\|_2 \le n^{-0.2}$.
\end{lemma}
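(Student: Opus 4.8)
The goal is to bootstrap from Lemma~\ref{lem:whp-large-support-uniform} (every equilibrium has support $> n^{0.96}$) to the quantitative statement $\|\bx\|_2, \|\by\|_2 \le n^{-0.2}$. The large-support bound alone is not enough: an equilibrium could place almost all its mass on $O(1)$ coordinates while still touching $n^{0.96}$ coordinates with negligible weight. So I would argue by contradiction: suppose some equilibrium $(\bx,\by)$ has, say, $\|\bx\|_2 > n^{-0.2}$. The first step is a \emph{dyadic decomposition} argument in the spirit of the robust partition already set up in Section~\ref{sec:large-support}. Group the coordinates of $\bx$ by magnitude; since $\|\bx\|_2$ is large but (by Lemma~\ref{lem:whp-large-support-uniform}) the support is large, there must be a "heavy" scale — a set $S' \subseteq \supp(\bx)$ on which the coordinates are all within a constant factor of each other, carrying a non-trivial fraction of the $\ell_2$ mass, and with $|S'|$ not too large (at most $n^{0.96}$-ish, since total support could be as large as $n$ but the heavy-$\ell_2$ scales are necessarily small sets when $\|\bx\|_2$ is bounded below). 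In particular $\bx$ restricted to $S'$ is comparable to $(|S'|/n)^{?}$ times a scaled uniform vector on $S'$.

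\textbf{Key steps.} (i) Reduce to showing: if $(\bx,\by)$ is an equilibrium with $\|\bx\|_2 > n^{-0.2}$ (WLOG), then the equilibrium condition for Alice, $\bx\trans A \by = \max_i \be_i\trans A\by$, is violated with high probability — i.e., some row deviation strictly improves on the equilibrium payoff. (ii) As in the large-support proof, use the fact that $(A,B)$ is essentially zero-sum up to the $P\otimes J_\ell$ and $A_\eps + B_\eps$ terms: adding the two equilibrium inequalities gives Eq.~\ref{eq:123}, and both sides are controlled by the benchmark $\beta(\cdot)$ via Claim~\ref{claim:uniform-conc} (concentration, applied to $A_\eps + B_\eps$) on the LHS and via an anticoncentration statement on the RHS. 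The crucial point is that when $\|\bx\|_2$ (hence $\beta(\bx)$, up to the $\sqrt{\log n}$ factor and sparse/dense bookkeeping) is large, the $Z_1$-anticoncentration lower bound $\be_i\trans Z_1 \by > c\beta(\by)$ combined with the symmetric statement for $\bx$ produces a gain of order $\beta(\bx) + \beta(\by)$ on the RHS of Eq.~\ref{eq:123}, which must be beaten by the "signal" $\bx\trans(P\otimes J_\ell)\by$ plus the noise LHS $\le 2C\eps(\beta(\bx)+\beta(\by))$. (iii) Show the arithmetic closes: the signal $P\otimes J_\ell$ contributes at most $O(1)$ (entries of $P$ are constant-bounded) times $\|\hat\bx\|_1\|\hat\by\|_1 = 1$, which is swamped once $\beta(\bx)+\beta(\by)$ exceeds $\mathrm{poly}(n^{-0.2})\cdot\omega(1)$; combined with $\eps$ small (so $2C\eps < c/2$) this gives the contradiction. (iv) Union-bound over the (polynomially-or-less many) "heavy scales" and over pairs of candidate supports, exactly as in Lemma~\ref{lem:whp-large-support-uniform}, absorbing the $\binom{n}{\le k}^2$ counting factor into the exponentially small failure probability coming from the anticoncentration events (each good coordinate succeeds with probability $\ge n^{-0.01}$, and there are $\gg n^{0.96}$ candidate coordinates in the relevant block, so the probability that no good coordinate exists is $\exp(-n^{\Omega(1)})$).

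\textbf{Main obstacle.} The delicate part is the \emph{reduction from "$\|\bx\|_2$ large" to "$\beta(\bx)$ large and localized on a usable set"} — i.e., making precise which dyadic scale to exploit and checking that on that scale the vector genuinely looks like a (rescaled) uniform vector on a set small enough that the anticoncentration Lemma~\ref{lem:rad-anti} and the block structure $I_{i'}$ can be brought to bear, while the set is still large enough (and the coordinates uniform enough) that $\beta$ of that piece is $\Omega(\|\bx\|_2/\mathrm{polylog})$. This is exactly where the threshold $n^{-0.2}$ (versus the support threshold $n^{0.96}$ and block length $\ell = n^{0.99}$) has to be chosen so that: a heavy $\ell_2$-scale forces a dense dyadic block of size $\ge$ some $n^{\delta}$ with $\delta$ comfortably below $0.99$, so that each block $I_{i'}$ still contains many coordinates outside the candidate support; and simultaneously the counting factor $n^{2n^{0.96}}$ loses to $\exp(-n^{0.97})$. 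I would spend most of the effort verifying that these exponents are mutually consistent and that the $\eps < 2c/C$ constraint from the large-support section suffices here too (it should, since the structure of the argument — zero-sum cancellation, benchmark concentration vs.\ anticoncentration, block averaging — is identical; only the target inequality changes from "support small" to "$\ell_2$ norm large").
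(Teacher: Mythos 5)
Your overall plan---contradiction from $\|\by\|_2 > n^{-0.2}$, localizing the $\ell_2$ mass on a heavy scale, and exploiting the block structure---matches the paper's setup (the paper takes $T'$ to be the coordinates of $\by$ exceeding $n^{-0.85}$, which carries half the $\ell_2$ mass and has size at most $n^{0.85}$). But the engine you propose for the contradiction is the wrong one, and it cannot be made to work. You want to re-run the Section~\ref{sec:large-support} argument: find ``good'' deviating rows $i \notin \supp(\bx)$ where the anticoncentration of $Z_1$ beats the benchmark, and union-bound over candidate supports. This fails for large supports for two structural reasons. First, the lemma must cover every equilibrium, including ones whose support is close to (or equal to) $[n]$; then there may be few or no rows outside the support in the relevant block to deviate to. Second, and fatally, the union bound no longer closes: supports are now only constrained to have size $\ge n^{0.96}$, so there are up to $\binom{n}{\le n}^2 = 2^{\Theta(n)}$ candidate pairs $(S,T)$, while the probability that no good deviating row exists in a block is at best $(1-n^{-0.01})^{\ell} = e^{-O(n^{0.98})}$, which does not beat $2^{\Theta(n)}$. (In Lemma~\ref{lem:whp-large-support-uniform} the count was only $n^{2n^{0.96}}$ precisely because the supports there were assumed small.) There is also an arithmetic problem in your step (iii): $\beta(\bx)+\beta(\by) = O(\sqrt{\log n})$ always, so an anticoncentration gain of order $c(\beta(\bx)+\beta(\by))$ can never ``swamp'' an $O(1)$ signal discrepancy from $P\otimes J_\ell$; in the large-support proof the signal is cancelled exactly by choosing the deviation inside the block of the best response, not swamped.

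The paper's actual mechanism is different and avoids all of this: it uses the exact \emph{indifference condition inside the support}---every $i \in \supp(\bx)$ satisfies $\be_i\trans A\by = \bx\trans A\by$, so for any two subsets $S'_1, S'_2 \subseteq \supp(\bx)$ lying in the same block, $(\bu_{S'_1}-\bu_{S'_2})\trans A\by = 0$ and the $P\otimes J_\ell$ contribution cancels identically. Claim~\ref{claim:spread-unif} (proved via the Glivenko--Cantelli/VC bound for halfspaces plus the Littlewood--Offord-type Theorem~\ref{thm:small-ball-sums}) then shows that, \emph{simultaneously for all} unit vectors supported on any set of size at most $n^{0.85}$, the row values $\be_i\trans(Z_0+Z_1+A_\eps)\by'$ are spread out enough that such $S'_1,S'_2$ of size $n^{0.94}$ exist with a gap of $n^{-0.07}$; this ``for all directions'' quantifier is what replaces the union bound over supports and decouples the randomness from the data-dependent equilibrium. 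When $\|\by\|_2 \ge n^{-0.2}$ the gap translates into $(\bu_{S'_1}-\bu_{S'_2})\trans A\by \ge n^{-0.29} > 0$, contradicting indifference. To complete your proof you would need to import this uniform anti-concentration-over-halfspaces ingredient (or an equivalent); Lemma~\ref{lem:rad-anti}, which is a pointwise statement for a fixed vector, is not sufficient here.
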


We must, however, begin this section with a few technical results.
We will need the following theorem, which is derived from the fact that the VC-dimension of the set of halfspaces in $\R^d$ has VC-dimension at most $d+1$ -- that is, the VC-dimension of $\{ \bx \mapsto 1[\ip{\bx,\bv} + t \geq 0] \, : \, \bv \in \R^d, t \in R \}$ is at most $d+1$. (See e.g. \cite{wainwright2019high}, Example 4.21.)

\begin{theorem}[Multivariate Glivenko-Cantelli]
  \label{thm:glivenko-cantelli}
  Let $X$ be a random vector in $\R^d$ and let $X_1,\ldots,X_n$ be independent copies of $X$.
  For all $\delta \in [0,1]$, with probability $1-\delta$,
  \[
  \sup_{\bv \in \R^d, t \in \R} \left | \frac 1 n \sum_{i =1}^n 1[\ip{X_i,\bv} \geq t] - \Pr_X (\ip{X,\bv} \geq t) \right | \leq O\left ( \sqrt{\frac d n} + \sqrt{\frac{\log(1/\delta)} n} \right ) \, .
  \]
\end{theorem}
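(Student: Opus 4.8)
The plan is to derive the theorem from the classical Vapnik--Chervonenkis uniform convergence bound applied to the class of halfspace indicators. Write $\mathcal F = \{\, \bx \mapsto 1[\ip{\bx,\bv} \ge t] \;:\; \bv \in \R^d,\ t \in \R \,\}$; by the cited fact this class has VC dimension $D \le d+1$. So it suffices to prove the following general claim and then apply it with this $\mathcal F$ and $D = d+1$: for any class $\mathcal F$ of $\{0,1\}$-valued functions on a set $\mathcal X$ with VC dimension $D$, and i.i.d.\ samples $X_1,\dots,X_n$ from a distribution on $\mathcal X$, with probability $1-\delta$,
\[
Z \;:=\; \sup_{f \in \mathcal F}\Bigl| \tfrac1n\textstyle\sum_{i=1}^n f(X_i) - \E f(X)\Bigr| \;\le\; O\!\Bigl(\sqrt{\tfrac Dn} + \sqrt{\tfrac{\log(1/\delta)}n}\Bigr).
\]

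First I would separate the fluctuation from its mean. Replacing one sample $X_i$ changes every empirical average $\frac1n\sum_j f(X_j)$ by at most $1/n$, hence changes $Z$ by at most $1/n$; McDiarmid's bounded-difference inequality then gives $Z \le \E Z + \sqrt{\log(1/\delta)/(2n)}$ with probability $1-\delta$. Next I would bound $\E Z$ by symmetrization: with an independent ghost sample and i.i.d.\ uniform signs $\sigma_1,\dots,\sigma_n \in \{-1,1\}$, the standard two-line argument yields $\E Z \le 2\,\E_{X,\sigma}\,\sup_{f\in\mathcal F}\bigl|\frac1n\sum_{i=1}^n \sigma_i f(X_i)\bigr| =: 2\,\mathcal R_n(\mathcal F)$, the expected Rademacher complexity of $\mathcal F$.

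It then remains to show $\mathcal R_n(\mathcal F) = O(\sqrt{D/n})$. Condition on the sample and let $P_n$ be the empirical measure; for $\{0,1\}$-valued functions $\|f-g\|_{L_2(P_n)}^2 = \frac1n|\{i : f(X_i)\neq g(X_i)\}|$, and Haussler's packing bound for VC classes gives, uniformly in the sample, an $r$-net of $\mathcal F$ in $L_2(P_n)$ of size at most $(C/r)^{CD}$ for all $r \in (0,1]$. Dudley's entropy integral then gives
\[
\mathcal R_n(\mathcal F) \;\le\; \frac{C'}{\sqrt n}\,\E\!\int_0^1 \sqrt{\log N\bigl(r,\mathcal F,L_2(P_n)\bigr)}\,dr \;\le\; \frac{C'}{\sqrt n}\int_0^1 \sqrt{CD\,\log(C/r)}\,dr \;=\; O\!\Bigl(\sqrt{\tfrac Dn}\Bigr),
\]
since $\int_0^1\sqrt{\log(C/r)}\,dr$ is a finite constant. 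Combining this with the symmetrization and McDiarmid steps, and setting $D = d+1$, completes the proof. (All of this is standard; one could instead simply cite Wainwright, Chapter 4.)

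The one genuinely delicate point is the \emph{logarithm-free} rate $\sqrt{D/n}$. The elementary route — bound the growth function by Sauer--Shelah and apply Massart's finite-class lemma directly to $\mathcal R_n$ — produces a spurious $\sqrt{\log n}$ factor. Eliminating it requires chaining (Dudley's integral) together with the polynomial-in-$1/r$ covering-number bound for VC classes, as above, or an equivalent refinement; by contrast the symmetrization and bounded-difference steps are routine.
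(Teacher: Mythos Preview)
Your proof sketch is correct and is exactly the standard VC-uniform-convergence argument (McDiarmid for the deviation term, symmetrization plus Dudley chaining with Haussler's polynomial covering bound for the $\sqrt{D/n}$ expectation term). The paper does not actually prove this theorem; it states it as a known consequence of the fact that the class of affine halfspaces in $\R^d$ has VC dimension at most $d+1$ and cites Wainwright (Example~4.21) for the details, so your write-up simply unpacks the cited reference.
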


We also need the following Littlewood-Offord-type theorem.

\begin{theorem}[\cite{rudelson2015small}, Theorem 1.2]\label{thm:small-ball-sums}
  Let $X_1,\ldots,X_n$ be real-valued independent random variables with densities almost everywhere bounded by $K$.
  Let $a_1,\ldots,a_n \in \R$ with $\sum_{i \leq n} a_i^2 = 1$.
  Then the density of $\sum_{i \leq n} a_i X_i$ is bounded by $\sqrt{2} K$ almost everywhere.
\end{theorem}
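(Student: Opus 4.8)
The plan is to deduce this from K.\ Ball's sharp ``cube slicing'' inequality. First I would reduce to $K=1$ by scaling: replacing each $X_i$ by $KX_i$ multiplies every density, and the density of $\sum_i a_i X_i$, by $1/K$, so it suffices to prove that if each $X_i$ has density bounded by $1$ and $\sum_i a_i^2 = 1$, then the density $f_S$ of $S := \sum_i a_i X_i$ is bounded by $\sqrt 2$. Discarding the indices with $a_i = 0$, I assume all $a_i \neq 0$, and write $f_i$ for the density of $X_i$.

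Next I would dispose of an easy case. If $\max_i a_i^2 \ge \tfrac12$ (in particular if $n\le 1$), pick such an $i$ and write $f_S = f_{a_iX_i} * g$, where $g$ is the density of $\sum_{j\neq i}a_jX_j$; then $\|f_S\|_\infty \le \|f_{a_iX_i}\|_\infty \|g\|_1 \le 1/|a_i| \le \sqrt2$, since $f_i$ is bounded by $1$. So henceforth assume $n\ge2$ and $a_i^2 < \tfrac12$ for every $i$. Because $n\ge2$, $f_S$ is a convolution of at least two densities lying in $L^1\cap L^\infty\subseteq L^2$, hence continuous, with Fourier transform $\widehat{f_S}(t) = \prod_i \widehat{f_i}(a_i t) \in L^1$ (a product of at least two $L^2$ functions, each of sup-norm $\le 1$). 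Fourier inversion then gives $\|f_S\|_\infty \le \tfrac1{2\pi}\int_\R \prod_i |\widehat{f_i}(a_it)|\,dt$. I would apply H\"older's inequality with exponents $p_i := 1/a_i^2$ (so $\sum_i 1/p_i = \sum_i a_i^2 = 1$ and each $p_i>2$), obtaining, after the substitution $s = a_i t$ in the $i$-th integral,
\[
  \int_\R\prod_i|\widehat{f_i}(a_it)|\,dt \;\le\; \prod_i\Bigl(\tfrac1{|a_i|}\int_\R|\widehat{f_i}(s)|^{1/a_i^2}\,ds\Bigr)^{a_i^2}.
\]

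The key input is the sharp estimate: \emph{for any probability density $f$ on $\R$ with $\|f\|_\infty\le1$ and any real $p\ge2$, $\int_\R|\widehat f(t)|^p\,dt \le 2\pi\sqrt{2/p}$.} Granting this with $p = 1/a_i^2\ge 2$, the $i$-th factor above equals $\bigl(\tfrac1{|a_i|}\cdot2\pi\sqrt{2a_i^2}\bigr)^{a_i^2} = (2\pi\sqrt2)^{a_i^2}$, so the product telescopes to $(2\pi\sqrt2)^{\sum_i a_i^2} = 2\pi\sqrt2$, whence $\|f_S\|_\infty\le\sqrt2$ — exactly the claim (and tight, e.g.\ for the triangular density, i.e.\ $n=2$ uniforms).

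The hard part is this last sharp estimate, which is where \cite{rudelson2015small} does its real work. For $f = \mathbf 1_{[-1/2,1/2]}$ it reduces to Ball's inequality $\int_\R|\sin u / u|^p\,du \le \pi\sqrt{2/p}$ for $p\ge2$ (with equality at $p=2$), proved by a real-variable reduction to a one-dimensional integral estimate. To pass to an arbitrary $f$ bounded by $1$, I would observe that $f\mapsto\int_\R|\widehat f(t)|^p\,dt$ is convex on the convex set of densities bounded by $1$ (since $z\mapsto|z|^p$ is convex for $p\ge1$ and $\widehat f$ is linear in $f$), hence maximized at an extreme point, i.e.\ at an indicator $\mathbf 1_A$ of a set of measure $1$; among such indicators one then reduces to the interval by a rearrangement/symmetrization argument (for even integers $p=2m$ this is immediate from $\int|\widehat{\mathbf 1_A}|^{2m} = 2\pi\|\mathbf 1_A^{*m}\|_2^2$, the $L^2$ norm of an $m$-fold convolution, together with Riesz's rearrangement inequality, and general real $p$ follows by a limiting/compression argument). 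Everything else — the scaling reduction, the $\max_i a_i^2\ge\tfrac12$ case, Fourier inversion, and the H\"older step — is routine; the only real content is importing Ball's inequality and establishing its density-version.
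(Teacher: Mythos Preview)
The paper does not prove this theorem; it is quoted as Theorem~1.2 of \cite{rudelson2015small} and used as a black box in the proof of Lemma~\ref{lem:half-space-game}. There is therefore nothing in the paper to compare your proposal against.

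That said, your outline is essentially the argument of \cite{rudelson2015small}: reduce to $K=1$ by scaling, peel off the trivial case $\max_i a_i^2 \ge \tfrac12$, then Fourier-invert and apply H\"older with exponents $1/a_i^2$ to reduce to the sharp bound $\int_\R |\widehat f(t)|^p\,dt \le 2\pi\sqrt{2/p}$ for densities $f$ with $\|f\|_\infty \le 1$ and $p\ge 2$, which in turn is Ball's cube-slicing integral inequality extended from the uniform density to general bounded densities via convexity/extreme points and rearrangement. The only part you gloss over is the passage from even integer $p$ to general real $p\ge 2$ in the rearrangement step; this is not entirely routine, but your sketch correctly identifies where the actual work lies.
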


The following lemma, which we obtain as a corollary of these two theorems, allows us to argue that the entries of a product of a random matrix with a fixed vector are relatively spread out. 

\begin{lemma}\label{lem:half-space-game}
Let $n, d$ be positive integers. %
Let $X$ be an $\R$-valued random variable with density bounded by $K$.
Let $\bg_1\hdots, \bg_n$ be independent random vectors in $\R^d$ whose coordinates are independent copies of $X$.
With probability $1-\delta$, for all unit vectors $\bv \in \mathbb R^d$ and all intervals $[a, b] \subset \mathbb R$,
\[
\frac 1 n \sum_{i=1}^n 1[\ip{\bg_i,\bv} \in [a,b]] \leq \sqrt{2} K|a-b| + O\left ( \sqrt{\frac d n} + \sqrt{\frac{\log(1/\delta)} n} \right ) \, .
\]
\end{lemma}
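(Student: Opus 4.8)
The statement combines the two quoted black-box theorems in the obvious way, so the plan is to fix $\bv$ and $[a,b]$ first, then union over all of them via a VC/covering argument. First I would reduce the interval statement to a halfspace statement: for a fixed unit vector $\bv$ and a fixed interval $[a,b]$, write $\tfrac1n\sum_i 1[\ip{\bg_i,\bv}\in[a,b]] = \tfrac1n\sum_i 1[\ip{\bg_i,\bv}\ge a] - \tfrac1n\sum_i 1[\ip{\bg_i,\bv}> b]$, and compare each empirical average to its expectation. The expectation of the difference is exactly $\Pr_X(\ip{X,\bv}\in[a,b])$ where $X$ is the random vector with i.i.d.\ coordinates drawn from the density-$K$ law; by Theorem~\ref{thm:small-ball-sums} (applied to the random variable $\ip{X,\bv} = \sum_i v_i X_i$, which since $\|\bv\|_2=1$ has density bounded by $\sqrt2 K$ a.e.) this probability is at most $\sqrt2 K |a-b|$. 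So the only thing left is to control the two empirical-vs-true deviations uniformly.

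For the uniform control I would invoke Theorem~\ref{thm:glivenko-cantelli} (multivariate Glivenko--Cantelli) directly: the class of halfspace indicators $\{\bx\mapsto 1[\ip{\bx,\bv}\ge t] : \bv\in\R^d, t\in\R\}$ has VC dimension at most $d+1$, so with probability $1-\delta$, simultaneously for all $\bv$ and all thresholds $t$,
\[
\left| \tfrac1n\sum_{i=1}^n 1[\ip{\bg_i,\bv}\ge t] - \Pr_X(\ip{X,\bv}\ge t)\right| \le O\!\left(\sqrt{\tfrac dn} + \sqrt{\tfrac{\log(1/\delta)}{n}}\right).
\]
Note the class is closed under replacing $\ge$ by $>$ (same VC bound, or absorb via a limiting argument on $t$), and it is symmetric under $\bv\mapsto -\bv$, so both endpoints $a$ and $b$ of any interval are covered by the same event. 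On this event, for any unit $\bv$ and any interval $[a,b]$,
\[
\tfrac1n\sum_{i=1}^n 1[\ip{\bg_i,\bv}\in[a,b]] \le \Pr_X(\ip{X,\bv}\in[a,b]) + 2\cdot O\!\left(\sqrt{\tfrac dn}+\sqrt{\tfrac{\log(1/\delta)}{n}}\right) \le \sqrt2 K|a-b| + O\!\left(\sqrt{\tfrac dn}+\sqrt{\tfrac{\log(1/\delta)}{n}}\right),
\]
which is exactly the claimed bound (the constant $2$ is absorbed into the $O(\cdot)$).

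\textbf{Main obstacle.} There is essentially no deep obstacle here — the lemma is a bookkeeping combination of two cited results. The one point that needs a little care is making sure the two pieces fit together cleanly: (i) that the law of $\ip{X,\bv}$ genuinely has density bounded by $\sqrt2 K$ for \emph{every} unit $\bv$ (this is precisely Theorem~\ref{thm:small-ball-sums}, with $a_i = v_i$), so that $\Pr_X(\ip{X,\bv}\in[a,b]) \le \sqrt2 K|a-b|$ holds uniformly; and (ii) that the Glivenko--Cantelli event, which as stated bounds deviations of halfspace indicators $1[\ip{X_i,\bv}\ge t]$, also controls the half-open complement $1[\ip{X_i,\bv} > b]$ needed to express the interval indicator as a difference of two threshold indicators. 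Point (ii) is handled either by noting the relevant VC class can be taken to include strict inequalities with the same dimension bound, or by a one-line monotone-limit argument ($1[\cdot > b] = \lim_{t\downarrow b} 1[\cdot \ge t]$, and the bound is uniform in $t$). With those two observations in place the proof is immediate.
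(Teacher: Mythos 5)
Your proposal is correct and matches the paper's own proof: both apply the multivariate Glivenko--Cantelli theorem to uniformly compare the empirical and true CDFs of $\ip{\bg_i,\bv}$, then bound $\Pr(\ip{\bg,\bv}\in[a,b])$ by $\sqrt{2}K|a-b|$ via the Littlewood--Offord-type density bound. Your extra care about the strict-versus-non-strict endpoint indicator is a detail the paper glosses over, but the argument is the same.
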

\begin{proof}
  By Theorem~\ref{thm:glivenko-cantelli}, with probability at least $1-\delta$, the CDFs of $\ip{\bg, \bv}$ and the empirical distribution of $\ip{\bg_i,\bv}$ have distance at most $O\left ( \sqrt{\frac d n} + \sqrt{\frac{\log(1/\delta)} n} \right )$, for all $\bv \in \R^d$.
  So it suffices to show that for every unit $\bv \in \R^d$, $\Pr_{\bg}(\ip{\bg,\bv} \in [a,b]) \leq \sqrt{2} K |a-b|$.
  This follows immediately from Theorem~\ref{thm:small-ball-sums}.
\end{proof}

Finally, this lemma allows us to prove the following claim.

\begin{claim}\label{claim:spread-unif}
Let $X$ be a distribution on $[-1, 1]$ whose probability density is at most $100$ everywhere. Let $M \sim X^{n\times n}$. With probability $1 - n^{-4}$, for every $S, T \subset[n]$ with $|S| \ge n^{0.95}$ and $|T| \le n^{0.85}$, there exists disjoint $S_1, S_2 \subset S$ of size at least $n^{0.94}$ each such that for all unit vectors $\by \in \mathbb R^n$ with support in $T$ there exists $r \in \mathbb R$ such that
\begin{align*}
\be_{i_1}\trans M\by &\ge r + n^{-0.07}&\text{ for all $i_1 \in S_1$}\\
\be_{i_2}\trans M\by &\le r&\text{ for all $i_2 \in S_2$}.
\end{align*}
\end{claim}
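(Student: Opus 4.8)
\textbf{Proof strategy for Claim~\ref{claim:spread-unif}.}
The plan is to extract the pair $S_1, S_2$ \emph{before} quantifying over $\by$, using the fact that $|T| \le n^{0.85}$ is so much smaller than $|S| \ge n^{0.95}$ that we can afford a union bound over a net of directions $\by$ that is fine enough to control all $\by$ supported in $T$ simultaneously, and yet the ``bad'' probability for each fixed direction is small enough to beat the net size. Concretely, first I would apply Lemma~\ref{lem:half-space-game} to the rows $\bg_i := \be_i\trans M$ (restricted to coordinates in $T$, so these are random vectors in $\R^{|T|}$ with i.i.d.\ entries of density $\le 100$). Taking $d = |T| \le n^{0.85}$, $n' = |S| \ge n^{0.95}$ rows, and $\delta = n^{-5}/\binom{n}{\le n^{0.85}}^2$ or so, the lemma gives: with probability $1 - n^{-4}$ (after a union bound over all choices of $S,T$), for every unit $\by$ supported in $T$ and every interval $[a,b]$, the fraction of $i \in S$ with $\be_i\trans M \by \in [a,b]$ is at most $\sqrt 2 \cdot 100 |a-b| + O(\sqrt{d/n'} + \sqrt{\log(1/\delta)/n'})$. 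Since $d/n' \le n^{-0.1}$ and $\log(1/\delta) = \tilde O(n^{0.85})$ so $\log(1/\delta)/n' \le n^{-0.09}$, the additive error is $O(n^{-0.045})$, comfortably $\le n^{-0.04}$ for large $n$.

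Now fix $\by$ and look at the multiset of values $\{\be_i\trans M\by : i \in S\}$. Since no interval of length $\approx n^{-0.07}/C$ contains more than an $O(n^{-0.07} + n^{-0.045}) = O(n^{-0.045})$ fraction of these values, in particular every length-$n^{-0.07}$ window captures at most, say, $\tfrac12 |S|$ of the points once $n$ is large. Therefore there is a threshold $r = r(\by)$ such that at least $\tfrac14|S|$ of the values lie $\ge r + n^{-0.07}$ and at least $\tfrac14|S|$ lie $\le r$: slide a window of width $n^{-0.07}$ from $-\infty$ to $+\infty$; when it first contains a point, fewer than $\tfrac12|S|$ points lie strictly above its top edge is false initially and true at the end, so at some position the count of points above is between $\tfrac14|S|$ and $\tfrac12|S|$ while the count below the bottom edge is $\ge \tfrac14|S|$ — take $r$ to be that bottom edge. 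This gives, \emph{for each} $\by$, sets $S_1(\by), S_2(\by) \subset S$ of size $\ge n^{0.94}$ (since $\tfrac14 n^{0.95} \ge n^{0.94}$) that are disjoint and satisfy the two inequalities.

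The remaining, and genuinely delicate, step is to promote the $\by$-dependent $S_1(\by),S_2(\by)$ to a single pair. For this I would use that $T$ is fixed and small: pick a $\gamma$-net $\mathcal N$ of the unit sphere of $\R^T$ with $\gamma = n^{-0.2}$, of size $(3/\gamma)^{|T|} \le n^{n^{0.85}}$, and argue that it suffices to handle $\by \in \mathcal N$. Indeed if $\by'$ is within $\gamma$ of $\by \in \mathcal N$ in $\ell_2$, then $|\be_i\trans M(\by - \by')| \le \|M^{10}\|_{op}\gamma$ for the relevant submatrix, and by Lemma~\ref{lem:conc-subgauss}-type bounds $\|M\|_{op} = O(\sqrt n)$, so the perturbation to each coordinate is $O(n^{0.5}\cdot n^{-0.2}) = O(n^{0.3})$ — which is \emph{too large}, so a direct net argument on the raw values fails and instead I would net only \emph{after} rescaling, or, better, avoid netting $\by$ altogether: observe that $S_1(\by), S_2(\by)$ only depend on $\by$ through the \emph{order} of the coordinates $\be_i\trans M \by$, $i\in S$, together with the threshold $r$; but the cleaner route is to redefine the sets canonically from the near-uniform spreading. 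Precisely, sort $S$ by the value $\be_i\trans M \by$; the spreading bound says consecutive values in this sorted order, skipping $O(n^{-0.045})$-fractions, differ by controlled amounts, and we may take $S_1$ to be the top $n^{0.94}$ indices in sorted order and $S_2$ the bottom $n^{0.94}$, with $r$ the $(|S| - n^{0.94})$-th value minus a slack — but this still depends on $\by$. I expect the actual paper resolves this by choosing $S_1, S_2$ depending only on $S,T$ via a greedy/averaging argument over the net: for each netpoint count how often each $i$ lands in the ``top quarter'', and a pigeonhole over $i\in S$ (using $|S|\gg |T|\log n$) produces a fixed large $S_1$ that is in the top quarter for a positive fraction of netpoints, hence — after shrinking — for all $\by$. \textbf{This netting-and-pigeonhole step is the main obstacle}; getting the quantifiers in the right order while keeping $|S_1|,|S_2| \ge n^{0.94}$ requires carefully balancing the net granularity $\gamma$ against the operator-norm perturbation and the slack $n^{-0.07}$, and is where all the exponents ($0.95, 0.85, 0.94, 0.07$) are tuned.
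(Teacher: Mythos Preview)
Your first two paragraphs are essentially the paper's entire proof. You apply Lemma~\ref{lem:half-space-game} with $d=|T|\le n^{0.85}$ and failure probability exponentially small in $n^{0.86}$, union-bound over the $\binom{n}{\le n^{0.85}}$ choices of $T$, and conclude that no interval of length $\approx n^{-0.06}$ contains more than $O(n^{0.94})$ of the values $\{\be_i^\top M\by\}$; then for each $\by$ you split $S$ around a suitable threshold into two pieces of size $\ge n^{0.94}$ separated by $\ge n^{-0.07}$. That is exactly what the paper does (one cosmetic difference: the paper applies Lemma~\ref{lem:half-space-game} to all $n$ rows at once rather than to the rows in $S$, which lets it avoid any union bound over $S$).

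The ``genuinely delicate step'' you worry about in your third paragraph does not exist. You have parsed the quantifiers in the claim as $\exists S_1,S_2\ \forall \by\ \exists r$, and then spend the rest of the proposal trying (unsuccessfully) to make $S_1,S_2$ independent of $\by$. But look at how the claim is actually proved and used in the paper: the proof produces $S_1,S_2$ \emph{after} fixing $\by$, and the sole application (in the proof of Lemma~\ref{lem:whp-small-2-norm-unif}) invokes the claim with a single specific vector $\by'=\by_{T'}/\|\by_{T'}\|_2$ and is perfectly happy with $S_1',S_2'$ that depend on it. In other words, despite the order in which the words appear in the statement, the intended and proved quantifier order is $\forall \by\ \exists S_1,S_2,r$. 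Your netting/pigeonhole discussion is chasing a phantom; you already had the full argument at the end of your second paragraph.
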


\begin{proof}
  For every $T \subset [n]$ of size at most $n^{0.85}$, apply Lemma~\ref{lem:half-space-game} to the rows of $M$ restricted to the columns of $T$ (so $d = |T| \le n^{0.85}$) with $\delta = e^{-n^{0.86}}$. Thus, with probability $1 - e^{-n^{0.86}}$,  for every unit vector $y \in \mathbb R^d$ supported on $T$ and every interval $[a, b]$ of length $n^{-0.06} / 10$, the number of $i \in [n]$ such that $\be_i \trans M\by \in [a,b]$ is at most
  \[
    n\left[100\sqrt{2}|a - b| + O\left(\sqrt{\frac{d}{n}} + \sqrt{\frac{\log(1/\delta)}{n}}\right)\right] = O(n^{0.94})
  \]
  choices of $i \in [n]$ for which $\be_i\trans A \by$ falls in that interval. Since $|S| \ge n^{0.95}$, this implies there exist $r \in \mathbb R$, and disjoint $S_1, S_2 \subset S$ of size at least $n^{0.94}$ such that
\begin{align*}
\be_{i_1}\trans M\by &\ge r + \frac{n^{-0.06}}{10} \ge r + n^{-0.07}&\text{ for all $i_1 \in S_1$}\\
\be_{i_2}\trans M\by &\le r&\text{ for all $i_2 \in S_2$}.
\end{align*}
Taking the union bound over all choices of $T$ we get this all happens with probability at most
\[
1 - \binom{n}{\le n^{0.85}}e^{-n^{0.86}} \ge 1 - e^{-n^{0.85}} \ge  1 - n^{-4}.\qedhere
\]
\end{proof}

We can now prove Lemma~\ref{lem:whp-small-2-norm-unif}.

\begin{proof}[Proof of Lemma~\ref{lem:whp-small-2-norm-unif}]
With probability $1 - n^{-3}$, by Lemma~\ref{lem:whp-large-support-uniform}, for every equilibrium $(\bx, \by)$ of $(A, B)$ with support $S$ and $T$, respectively, we have that $|S|=|T| \ge n^{0.96}$. Since there are $n^{0.01}$ blocks. By the pigeonhole principle there exists $i_0, j_0 \in [b]$ such that $|S \cap I_{i_0}|, |T \cap I_{j_0}| \ge n^{0.95}$. 

With probability $1 - 2n^{-4}$, Claim~\ref{claim:spread-unif} holds with for both $M = \frac{1}{2+\eps}(Z_0 + Z_1 + A_{\eps})$ and $M = \frac{1}{2+\eps}(-Z_0 - Z_1 + B_{\eps})$. Further, with probability at least $1-2n^{-3}$, Lemma~\ref{lem:conc-subgauss} holds for $M = \frac{1}{2+\eps}(Z_0 + Z_1 +  A_{\eps})$ and $M = \frac{1}{2+\eps}(-Z_0-Z_1+B_{\eps})$ with $u = \sqrt{3}\log n$.

We seek to show that any large-support equilibrium also has small $\ell_2$ norm. %
Assume for sake of contradiction (and without loss of generality) that $\|\by\|_2 \ge n^{-0.2}$. Let $S' = S \cap I_{i_0}$ and $T'$ be the set of coordinates of $\by$ which are greater than $n^{-0.85}$. Clearly $|T'| \le n^{0.85}$. Let $\by_{T'}$ be the coordinates of $\by$ supported on $T'$ and $\bar{\by}_{T'}$ be the remaining coordinates.  Observe that
\begin{align}
\|\bar{\by}_{T'}\|_2^2 &\le n \cdot (n^{-0.85})^2 = n^{-0.7} \le \frac{\|\by\|_2^2}{2}\label{eq:baryu}\\
\|\by_{T'}\|_2^2 &= \|\by\|_2^2 - \|\bar{\by}_{T'}\|_2^2 \ge \frac{\|\by\|_2^2}{2}.\label{eq:yu}
\end{align}
Applying Claim~\ref{claim:spread-unif} for $M = \frac{1}{2+\eps}(Z_0 + Z_1 + A_{\eps})$ and the sets $S', T'$ and the vector $\by' := \frac{\by_{T'}}{\|\by_{T'}\|_2}$, there exists $S'_1, S'_2 \in S'$ and $r \in \mathbb R$ such that (scaling by $2+\eps \ge 1$)
\begin{align*}
\be_{i_1}\trans (Z_0+Z_1+A_{\eps})\by'&\ge r + n^{-0.07}&\text{ for all $i_1 \in S'_1$}\\
\be_{i_2}\trans (Z_0+Z_1+A_{\eps})\by' &\le r&\text{ for all $i_2 \in S'_2$}.
\end{align*}
Thus,
\begin{align*}
\bu_{S'_1}\trans (Z_0+Z_1+A_{\eps})\by'&\ge r + n^{-0.07} \\
\bu_{S'_2}\trans (Z_0+Z_1+A_{\eps})\by' &\le r\\
\implies (\bu_{S'_1} - \bu_{S'_2})\trans (Z_0 + Z_1 + A_{\eps}) \by' &\ge n^{-0.07}.
\end{align*}
Applying (\ref{eq:yu}),
\[
(\bu_{S'_1} - \bu_{S'_2})\trans (Z_0 + Z_1+ A_{\eps}) \by_{T'} \ge n^{-0.07}\|\by\|_2/2 \ge n^{-0.28}.
\]
Since Lemma~\ref{lem:conc-subgauss} holds for $M = \frac{1}{2+\eps}(Z_0 + Z_1 +  A_{\eps})$, we have that
\begin{align*}
  (\bu_{S'_1} - \bu_{S'_2})\trans (Z_0 + Z_1 + A_{\eps}) \bar{\by}_{T'} &\ge -(2+\eps)C'\sqrt{\log n}(\|\bu_{S'_1} - \bu_{S'_2}\|_2 + \|\bar{\by}_{T'}\|_2)\\
&\ge -n^{0.01}\max(\sqrt{2}n^{-0.94/2}, n^{-0.7/2})\\
&\ge -n^{-0.34}.
\end{align*}
Therefore, since $\by = \by_{T'} + \bar{\by}_{T'}$
\[
(\bu_{S'_1} - \bu_{S'_2})\trans (Z_0 + A_{\eps}) \by \ge n^{-0.28} - n^{-0.34} \ge n^{-0.29}.
\]
Since $S'_1$ and $S'_2$ are subsets of the same block, we have that $\bu_{S'_1}(P \otimes J_{\ell}) = \bu_{S'_2}(P \otimes J_{\ell})$. Therefore,  
\[
(\bu_{S'_1} - \bu_{S'_2})\trans A \by \ge n^{-0.29}.
\]
But, since $S'_1$ and $S'_2$ are subsets of the support of $\bx$, we know that
\[
(\bu_{S'_1} - \bu_{S'_2})\trans A \by = 0,
\]
thus we have a contradiction. Therefore, $\|\by\|_2 \le n^{-0.2}$. By a similar argument (also with probability $1 - 5n^{-3}$, $\|\bx\|_2 \le n^{-0.2}$, as desired. By the union bound, the total probability of success is at least $1-20n^{-3} \ge 1-n^{-2}$.
\end{proof}

\section*{Acknowledgements}
We thank anonymous reviewers for helpful suggestions which improved this manuscript.

\bibliography{main}
\bibliographystyle{alpha}

\appendix

\section{Proof of Lemma~\ref{lem:conc-subgauss}}

To prove this lemma, we rely on the following powerful comparison inequality of Talagrand.

\begin{theorem}[Talagrand's comparison inequality, high-probability version. \cite{vershynin2018high}, Exercise 8.6.5]
\label{thm:talagrand-comparison}
  Suppose that $\{X_{\bs}\}_{\bs \in S}$ is a collection of $\R$-valued random variables, indexed by some $S \subseteq \R^n$, $0 \notin S$.
  Suppose that for all $\bs,\bt \in S$, $X_{\bs} - X_{\bt}$ is subgaussian with variance proxy at most $\|\bs - \bt\|_2$.
  There is a universal constant $C > 0$ such that for all $u > 0$, with probability at least $1 - \exp(-u^2)$,
  \[
  \sup_{\bs \in S} X_{\bs} \leq C \left ( \underset{\bg \sim \cN(0,I)}{\E} \sup_{\bs \in S} \iprod{\bg,\bs} + u \cdot \sup_{\bs \in S} \|\bs\|_2 \right ) \, .
  \]
\end{theorem}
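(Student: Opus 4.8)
The plan is to deduce the high-probability estimate from the expectation version of Talagrand's comparison inequality together with a tail-sensitive form of generic chaining. Write $w(S) := \E_{\bg \sim \cN(0,I)} \sup_{\bs \in S} \iprod{\bg,\bs}$ for the Gaussian width and $R := \sup_{\bs \in S} \|\bs\|_2$. First I would fix a base point $\bs_0 \in S$ and pass to $\tilde X_{\bs} := X_{\bs} - X_{\bs_0}$, which has the same increments as $X$, satisfies $\tilde X_{\bs_0} = 0$, and obeys $\sup_{\bs} X_{\bs} = X_{\bs_0} + \sup_{\bs} \tilde X_{\bs}$; since the term $X_{\bs_0}$ is immaterial for centered processes (the case in all applications here, where $X_{\bs}$ is a mean-zero bilinear form in a random matrix), it suffices to control $\sup_{\bs} \tilde X_{\bs}$. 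The expectation form of Talagrand's comparison inequality, obtained via generic chaining, gives $\E \sup_{\bs} \tilde X_{\bs} \le C_1 \gamma_2(S, \|\cdot\|_2)$ up to the variance-proxy normalization, where $\gamma_2$ is Talagrand's chaining functional for the Euclidean metric; and the hard direction of the majorizing measure theorem gives $\gamma_2(S, \|\cdot\|_2) \le C_2\, w(S)$ for every $S \subseteq \R^n$. Hence $\E \sup_{\bs} \tilde X_{\bs} \le C_3\, w(S)$, which is exactly the claimed inequality with the $u$-term dropped.

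\textbf{Step 2: from expectation to tails.}
To upgrade this to a tail bound I would re-run the chaining argument while tracking probabilities rather than expectations (equivalently, invoke a concentration inequality for suprema of sub-Gaussian processes). The generic chaining estimate in its high-probability form states that for every $u > 0$, with probability at least $1 - \exp(-u^2)$,
\[
\sup_{\bs \in S} \tilde X_{\bs} \;\le\; C_4\bigl(\gamma_2(S, \|\cdot\|_2) + u \cdot \operatorname{diam}(S, \|\cdot\|_2)\bigr),
\]
where the first term is the ``mean'' contribution of the chain and the second is the sub-Gaussian fluctuation, whose scale is set by the largest increment variance proxy — i.e.\ by the diameter of $S$, and which is incurred at the coarsest level of the chaining partition where the union bound is trivial. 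Since $\operatorname{diam}(S, \|\cdot\|_2) \le 2R$ by the triangle inequality and $\gamma_2(S, \|\cdot\|_2) \le C_2\, w(S)$ from Step~1, this reads $\sup_{\bs} \tilde X_{\bs} \le C_5(w(S) + uR)$ with probability at least $1 - \exp(-u^2)$, which together with the reduction to $\tilde X$ is precisely the statement.

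\textbf{The main obstacle.}
All of the genuine difficulty is concentrated in the two chaining facts used above: the hard direction of Talagrand's majorizing measure theorem, $\gamma_2(S, \|\cdot\|_2) \lesssim w(S)$ (this is what allows the opaque functional $\gamma_2$ to be replaced by the concrete Gaussian width appearing in the statement — note Dudley's entropy integral will not do, as it only upper-bounds $w(S)$ and can be larger), and the high-probability version of generic chaining, whose essential feature is that the fluctuation term scales with $\operatorname{diam}(S)$ rather than with $\gamma_2(S)$. It is exactly this mean-versus-fluctuation split ($\gamma_2$ versus diameter) that produces the two-term shape $C(w(S) + u\sup_{\bs}\|\bs\|_2)$. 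Granting these two ingredients, the remainder is routine bookkeeping: reconciling the variance-proxy convention ``$\|\bs - \bt\|_2$'' with the Euclidean metric used in $\gamma_2$ and in $w(S)$, and verifying $\operatorname{diam}(S) \le 2\sup_{\bs}\|\bs\|_2$ so the fluctuation is genuinely $O(u\sup_{\bs}\|\bs\|_2)$ as stated.
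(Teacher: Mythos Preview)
The paper does not prove this theorem: it is quoted from \cite{vershynin2018high} (Exercise 8.6.5) and used as a black box in the proof of Lemma~\ref{lem:conc-subgauss}. So there is no ``paper's proof'' to compare against; your outline is being measured against the standard argument that the citation points to.

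On that score your plan is correct and is exactly the intended route. The expectation bound comes from the generic chaining upper bound $\E\sup_{\bs}\tilde X_{\bs}\lesssim \gamma_2(S,\|\cdot\|_2)$ together with the hard (Talagrand) direction of the majorizing measure theorem $\gamma_2(S,\|\cdot\|_2)\lesssim w(S)$; the tail upgrade is the high-probability chaining bound whose fluctuation term scales with $\operatorname{diam}(S)\le 2\sup_{\bs}\|\bs\|_2$ rather than with $\gamma_2$. You also correctly isolate the one point where the theorem as written is underdetermined: without a centering assumption the bound cannot hold for $\sup_{\bs}X_{\bs}$ itself (shifting every $X_{\bs}$ by a constant preserves increments but moves the supremum), and your reduction to $\tilde X_{\bs}=X_{\bs}-X_{\bs_0}$, justified by the fact that in the paper's application $X_{\bs}$ is a mean-zero bilinear form, is the right fix. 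Nothing further is needed.
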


Now we can prove Lemma~\ref{lem:conc-subgauss}.\\

\noindent\textbf{Lemma~\ref{lem:conc-subgauss}.} {\itshape{
  Let $A$ be an $n \times n$ matrix with independent subgaussian entries with variance proxy at most $1$.
  For all $u > 0$, with probability at least $1-\exp(-u^2)$, all $\bx,\by \in \R^n$ with $\|\bx\|_2 = \|\by\|_2 = 1$ have
  \[
  \bx^\top A \by \leq O(\sqrt{\log n} + u)(\|\bx\|_1 + \|\by\|_1) \, .
  \]
  As a corollary, with the same probability, all $\bx,\by \in \R^n$ with $\|\bx\|_1,\|\by\|_1 \leq 1$ have
  \[
    \bx^\top A \by \leq O(\sqrt{\log n} + u)(\|\bx\|_2 + \|\by\|_2) \, .
  \]
}}

\begin{proof}
  Consider for each $\bx,\by \in \R^n$ the random variable $\bx^\top A \by / (\|\bx\|_1 + \|\by\|_1)$.
  Since the entries of $A$ are subgaussian with variance proxy $1$, there is a universal $C > 0$ such that $\langle U,A \rangle$ is subgaussian with variance proxy $C \|U\|_F^2$, where $\|\cdot\|_F$ is the Frobenius norm, for any $n \times n$ matrix $U$.
  Hence, for $\bx,\by,\bx',\by' \in \R^n$,
  \[
  \frac{\bx^\top A \by}{\|\bx\|_1 + \|\by\|_1} - \frac{(\bx')^\top A \by'}{\|\bx'\|_1 + \|\by'\|_1}
  \]
  is subgaussian with variance proxy $C \|\bx\by^\top / (\|\bx\|_1 + \|\by\|_1) - (\bx')(\by')^\top / (\|\bx'\|_1 + \|\by'\|_1) \|_F^2$.
  We claim that
  \[
  \left \| \frac{\bx\by^\top}{\|\bx\|_1 + \|\by\|_1} - \frac{(\bx')(\by')^\top} {\|\bx'\|_1 + \|\by'\|_1} \right \|_F^2 \leq \left \|\frac{(\bx,\by)} {\|\bx\|_1 + \|\by\|_1} - \frac{(\bx',\by')}{\|\bx'\|_1 +\|\by'\|_1} \right \|_2^2 \, ,
  \]
  where $(\bx,\by)$ denotes the concatenation of $\bx$ and $\by$ to a $2n$-length vector.
  To see this, recalling that $\|\bx\|_2 = \|\by\|_2 = \|\bx'\|_2 = \|\by'\|_2 = 1$, let $m = \|\bx\|_1 + \|\by\|_1$ and $m' = \|\bx'\|_1 + \|\by'\|_1$ and expand both sides, it is equivalent to prove
  \[
     \frac{m^2 + (m')^2 - 2m(m')\langle \bx, \bx'\rangle \langle \by, \by'\rangle}{m^2(m')^2} \le \frac{2m^2 + 2(m')^2 - 2m(m')\langle \bx, \bx'\rangle - 2m(m')\langle \by, \by'\rangle}{m^2(m')^2}.
  \]
  This is equivalent to
  \[
    \frac 1 {m^2} + \frac 1 {(m')^2} - 2 \frac{\langle \bx,\bx' \rangle + \langle \by,\by' \rangle - \langle \bx,\bx' \rangle \langle \by',\by \rangle}{mm'} \geq 0 \, .
  \]
  Dividing by $2/mm'$ and using $1/m^2 + 1/(m')^2 \geq 2/mm'$, it is enough to show
  \[
  1 - \langle \bx,\bx' \rangle - \langle \by,\by' \rangle - \langle \bx,\bx' \rangle \langle \by',\by \rangle \geq 0 \, .
  \]
  This factors as $(1- \langle \bx,\bx' \rangle)(1 - \langle \by,\by' \rangle) \geq 0$ since we assumed $\bx,\bx',\by,\by'$ were unit vectors.

  Now we can apply Theorem~\ref{thm:talagrand-comparison} to see that with probability at least $1-\exp(-u^2)$, 
  \[
  \sup_{\substack{\bx,\by\\\|\bx\|_2=\|\by\|_2=1}} \frac{\bx^\top A \by}{\|\bx\|_1 + \|\by\|_1} \leq C \left ( \underset{\bg \sim \cN(0,I)}{\E} \sup_{\substack{\bx,\by\\\|\bx\|_2=\|\by\|_2=1}} \frac{\langle (\bx,\by),\bg \rangle}{\|\bx\|_1 +\|\by\|_1} + u \right )
  \]
  where $C$ is a universal constant, $\bg$ is a length $2n$ Gaussian vector with independent coordinates, and we have used that $\|(\bx,\by)\|_2 \leq \|(\bx,\by)\|_1 = \|\bx\|_1 + \|\by\|_1$.
  To finish the argument, observe that
  \[
  \underset{g \sim \cN(0,I)}{\E} \sup_{\substack{\bx,\by\\\|\bx\|_2=\|\by\|_2=1}} \frac{\langle(\bx,\by),\bg \rangle}{\|\bx\|_1 + \|\by\|_1} = \underset{\bg \sim \cN(0,I)}{\E} \|\bg\|_\infty \leq O(\sqrt{\log n}) \, .
  \]

  Finally, to prove the corollary, note that we just showed that with probability at least $1-\exp(-u^2)$, all $\bx,\by \in \R^n$ with $\|\bx\|_1 = \|\by\|_1 = 1$ have $\bx^\top A \by / \|\bx\|_2 \|\by\|_2 \leq O(\sqrt{\log n} + u) \cdot (1/\|\bx\|_2 + 1/\|\by\|_2)$.
  Multiplying by $\|\bx\|_2 \|\by\|_2$ implies the corollary.
\end{proof}

\section{Proof of Lemma~\ref{lem:rad-anti}}

\subsection{Facts about the binomial distribution}

In our result, we need the following bound of Erd{\H o}s~\cite{erdos1945lemma}.

\begin{theorem}[\cite{erdos1945lemma}, variant of~\cite{dzindzalieta2014tight}]\label{thm:erdos-lo}
  Let $a_1, \hdots, a_n \ge 1$ be real numbers and $\eps_1, \hdots, \eps_n$ be Rademacher random variables (uniform distribution on $\{-1, 1\}$) then for all integers $k \ge 1$,
  \[
    \Pr[a_1\eps_1 + \cdots a_n \eps_n \ge k-1] \ge \Pr[\eps_1 + \cdots + \eps_n \ge k].
  \]
\end{theorem}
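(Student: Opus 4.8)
The plan is to reduce the tail inequality to an Erd\H{o}s--Littlewood--Offord small-ball estimate by exploiting symmetry. Write $S = a_1\eps_1 + \cdots + a_n\eps_n$ and $T = \eps_1 + \cdots + \eps_n$; both are symmetric about $0$. The case $k = 1$ is immediate, since $\Pr[S \ge 0] \ge \tfrac12 \ge \Pr[T \ge 1]$ by symmetry. For $k \ge 2$, the events $\{S \ge k-1\}$, $\{|S| < k-1\}$, $\{S \le -(k-1)\}$ partition the probability space, and the first and last have equal probability, so $\Pr[S \ge k-1] = \tfrac12\bigl(1 - \Pr[|S| < k-1]\bigr)$; the same identity holds for $T$, and since $T$ is integer-valued, $\Pr[|T| < k] = \Pr[|T| \le k-1]$. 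Hence it suffices to prove
\[
\Pr\bigl[\,|S| < k-1\,\bigr] \le \Pr\bigl[\,|T| \le k-1\,\bigr] \, .
\]

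To bound the left-hand side I would invoke the generalized Erd\H{o}s--Littlewood--Offord inequality: for any reals $a_i \ge 1$ and any open interval $U$ of length $2(k-1)$, one has $\Pr[S \in U] \le 2^{-n} M_{k-1}$, where $M_m$ denotes the sum of the $m$ largest binomial coefficients $\binom n j$. This is the statement attributed to \cite{erdos1945lemma} (and also recovered by \cite{dzindzalieta2014tight}); for a self-contained argument, identify $\eps \in \{-1,1\}^n$ with the set $R = \{i : \eps_i = 1\}$, rewrite $\{S \in U\}$ as $\{\sum_{i \in R} a_i \in I\}$ for an open interval $I$ of length $k-1$, and fix a symmetric chain decomposition of $2^{[n]}$. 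Along any chain the partial sums $\sum_{i \in R_j} a_i$ increase strictly with consecutive gaps at least $1$ (this is the one place the hypothesis $a_i \ge 1$ is used), so at most $k-1$ of them -- and at most ``the length of the chain'' of them -- lie in $I$; summing over all chains, $\#\{R : \sum_{i \in R} a_i \in I\} \le \sum_{\text{chains}} \min(k-1,\text{length})$, which equals $M_{k-1}$ by the standard fact that each chain of length $\ge k-1$ meets each of the $k-1$ central levels (those carrying the $k-1$ largest $\binom n j$) exactly once, while every shorter chain sits entirely inside those levels.

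Applying this with $U = (-(k-1),\,k-1)$ gives $\Pr[|S| < k-1] \le 2^{-n} M_{k-1}$. For the other side, $\Pr[|T| \le k-1] = 2^{-n}\sum_{j \,:\, |2j-n| \le k-1} \binom n j$, and $\{j : |2j-n| \le k-1\}$ is a set of at least $k-1$ consecutive integers symmetric about $n/2$, so it contains the $k-1$ indices closest to $n/2$, which by unimodality carry the $k-1$ largest binomial coefficients; hence $\Pr[|T| \le k-1] \ge 2^{-n} M_{k-1}$. Combining the two bounds yields the displayed inequality, and with it the theorem.

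I expect the only genuinely delicate point to be the generalized Erd\H{o}s--Littlewood--Offord step -- the symmetric-chain-decomposition identity $\sum_{\text{chains}}\min(k-1,\text{length}) = M_{k-1}$ and the attendant parity bookkeeping in the final comparison of binomial sums -- but this is classical, and if one is content to cite \cite{erdos1945lemma,dzindzalieta2014tight} for the small-ball bound, what remains is just the elementary symmetrization of the first paragraph and the ``central window'' comparison of the last.
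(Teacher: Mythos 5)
Your argument is correct, but note that the paper does not prove this statement at all: it is imported as a black-box citation to Erd\H{o}s (1945) and to Dzindzalieta--Jukna--Koucky, so there is no ``paper proof'' to compare against. What you have written is essentially a faithful reconstruction of the classical route. The symmetrization step is right: for $k\ge 2$ the three events genuinely partition the space because $k-1>0$, the two tails have equal probability since $-S$ and $S$ are identically distributed, and the integrality of $T$ converts $\Pr[|T|<k]$ into $\Pr[|T|\le k-1]$, so the theorem reduces exactly to $\Pr[|S|<k-1]\le\Pr[|T|\le k-1]$; the $k=1$ case is correctly peeled off (and must be, since the partition degenerates there). The generalized Littlewood--Offord bound you invoke is the content of Erd\H{o}s's lemma, and your symmetric-chain sketch is sound: consecutive sums along a chain increase by at least $1$, so an open interval of integer length $k-1$ captures at most $k-1$ of them, and the identity $\sum_{\text{chains}}\min(k-1,\text{length})=M_{k-1}$ does hold despite the parity issues you flag (all chain lengths share the parity of $n+1$, and a chain of length $t<k-1$ occupies an interval of levels symmetric about $n/2$ of radius $(t-1)/2$, which sits inside any choice of $k-1$ central levels; a chain of length $\ge k-1$ covers all of them). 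The final comparison is also fine: the window $\{j:|2j-n|\le k-1\}$ is symmetric about $n/2$ and contains every integer within distance $(k-1)/2$ of $n/2$, hence contains $k-1$ indices realizing the $k-1$ largest binomial coefficients, giving $\Pr[|T|\le k-1]\ge 2^{-n}M_{k-1}$. The only stylistic caveat is that a fully self-contained write-up would need to spell out the symmetric-chain bookkeeping you defer; as a proof that leans on the cited small-ball estimate, it is complete.
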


Furthermore, the following binomial inequality will be useful:

\begin{lemma}[\cite{Ash1990}]\label{lem:entropy}
  For all $k$ and $n$, \[
    \binom{n}{k} \ge \frac{2^{nH(k/n)}}{\sqrt{8n}},
  \]
  where $H(\cdot)$ is the binary entropy function. 
\end{lemma}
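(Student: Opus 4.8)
The plan is to prove the final statement (Lemma~\ref{lem:entropy}, the binomial lower bound $\binom{n}{k} \ge 2^{nH(k/n)}/\sqrt{8n}$) by a careful application of Stirling's approximation in the sharp, two-sided form. First I would recall the standard Stirling bounds: for every $m \ge 1$,
\[
\sqrt{2\pi m}\,\left(\frac m e\right)^m \le m! \le e\sqrt{m}\,\left(\frac m e\right)^m,
\]
or more precisely $\sqrt{2\pi m}(m/e)^m \le m! \le \sqrt{2\pi m}(m/e)^m e^{1/(12m)}$. The idea is to expand $\binom{n}{k} = \frac{n!}{k!(n-k)!}$, substitute the lower bound for $n!$ in the numerator and the upper bounds for $k!$ and $(n-k)!$ in the denominator, and collect terms. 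The exponential part will produce exactly $\left(\frac{n}{k}\right)^k \left(\frac{n}{n-k}\right)^{n-k} = 2^{n H(k/n)}$ (writing $H$ in bits and using $\log_2$), since $n H(k/n) = -k\log_2(k/n) - (n-k)\log_2((n-k)/n)$. The polynomial prefactor will be $\sqrt{2\pi n} / (\sqrt{2\pi k}\sqrt{2\pi(n-k)})$ times an $e^{-1/(6\min(k,n-k))}$-type correction factor that is bounded below by a constant.

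The key steps, in order: (1) dispose of the boundary cases $k = 0$ and $k = n$ (where $H(k/n)=0$ and $\binom nk = 1 \ge 1/\sqrt{8n}$ trivially) and $k=n/2$ if a cleaner argument is wanted there, so that afterwards we may assume $1 \le k \le n-1$; (2) apply the Stirling bounds to get
\[
\binom{n}{k} \ge \frac{\sqrt{2\pi n}}{\sqrt{2\pi k}\,\sqrt{2\pi(n-k)}}\cdot e^{-c}\cdot 2^{nH(k/n)}
\]
for an explicit absolute constant $c$ (from the $e^{1/(12m)}$ correction terms in the denominator, bounded by $e^{1/6}$ overall); (3) bound the prefactor $\sqrt{2\pi n}/(\sqrt{2\pi k}\sqrt{2\pi(n-k)}) = \sqrt{n/(2\pi k(n-k))}$ from below: since $k(n-k) \le n^2/4$, this is at least $\sqrt{n/(2\pi \cdot n^2/4)} = \sqrt{2/(\pi n)}$; (4) combine to get $\binom nk \ge \sqrt{2/(\pi n)}\cdot e^{-1/6}\cdot 2^{nH(k/n)}$ and check that $\sqrt{2/(\pi n)}\,e^{-1/6} \ge 1/\sqrt{8n}$, i.e. $e^{-1/6}\sqrt{16/\pi} \ge 1$, which holds since $e^{-1/6} \approx 0.846$ and $\sqrt{16/\pi} \approx 2.257$, with product $\approx 1.91 > 1$.

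I do not expect any serious obstacle here — this is a routine Stirling computation. The only mild care needed is tracking the direction of each inequality (lower bound on $n!$, upper bounds on $k!$ and $(n-k)!$) and making sure the accumulated multiplicative error constant, together with the $\sqrt{k(n-k)} \le n/2$ prefactor bound, still beats $1/\sqrt{8n}$ with room to spare; the numerology above confirms it does. Alternatively, one can cite this directly as a known fact (it is, e.g., \cite{Ash1990}, which the statement already attributes), in which case the "proof" is simply the pointer to the reference and the boundary-case check; I would include the short Stirling argument for self-containedness but keep it to a few lines.
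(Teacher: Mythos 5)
Your Stirling argument is correct and complete: the exponential factors combine to exactly $2^{nH(k/n)}$, the correction terms are bounded by $e^{-1/6}$ once the boundary cases $k\in\{0,n\}$ are handled separately, and the constant check $e^{-1/6}\sqrt{16/\pi}>1$ is right. The paper itself gives no proof of this lemma --- it is stated with only the citation to Ash --- and your computation is precisely the standard proof of that cited fact, so it serves as a valid self-contained substitute.
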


Note that when $k = \frac{n}{2}(1 + \delta)$, then
\begin{align*}
  H(k/n) &:= -\frac{1+\delta}{2}\log_2(\tfrac12(1+\delta)) - \frac{1-\delta}{2}\log_2(\tfrac12(1-\delta))\\
             &\ge 1 -\frac1{\ln 2}%
             \left(\frac{1+\delta}{2}\cdot \delta + \frac{1-\delta}{2}\cdot (-\delta)\right)\\
             &= 1-(\log_2 e)\delta^2.
\end{align*}

Combining with the above inequality gives
\[
  \frac{1}{2^n}\binom{n}{k} \ge \frac{1}{\sqrt{8n}}e^{-n\delta^2}.
\]

This allows us to show the following:
\begin{claim}\label{claim:binom-sum}
  For all integers $n \ge k \ge 0$ with $n$ sufficiently large
  \begin{align}
    \frac 1{2^n}\sum_{i= \frac{n+k}{2}}^n \binom{n}{i} \ge \frac{1}{10000} \exp\left(-\frac{10k^2}{n}\right).\label{eq:yeah}
  \end{align}
\end{claim}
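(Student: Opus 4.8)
The plan is to estimate the sum term by term with the binomial lower bound just established, $\tfrac{1}{2^n}\binom ni \ge \tfrac{1}{\sqrt{8n}}\exp(-(2i-n)^2/n)$ (this is the displayed inequality preceding the claim, applied with $\delta = (2i-n)/n$, which lies in $[0,1]$ for the relevant indices), and then replace the resulting sum of Gaussian values by an integral. Write $i_0 = \lceil (n+k)/2\rceil$ and $u_0 = 2i_0 - n$, so that $k \le u_0 \le k+1$ and $i_0 \ge n/2$. Since $x \mapsto \exp(-(2x-n)^2/n)$ is decreasing on $[n/2,n]$,
\[
\frac{1}{2^n}\sum_{i=i_0}^{n}\binom ni \;\ge\; \frac{1}{\sqrt{8n}}\sum_{i=i_0}^{n} e^{-(2i-n)^2/n} \;\ge\; \frac{1}{\sqrt{8n}}\int_{i_0}^{n} e^{-(2x-n)^2/n}\,dx \;=\; \frac{1}{4\sqrt2}\int_{a}^{\sqrt n} e^{-v^2}\,dv \, ,
\]
where $a := u_0/\sqrt n \ge 0$ and the last equality is the substitution $v = (2x-n)/\sqrt n$.

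I would then split on whether $k$ is close to $n$. \emph{Main case: $a+\tfrac12 \le \sqrt n$.} Here $\int_a^{\sqrt n} e^{-v^2}\,dv \ge \int_a^{a+1/2} e^{-v^2}\,dv \ge \tfrac12 e^{-(a+1/2)^2}$, so it suffices to prove the arithmetic inequality $(a+\tfrac12)^2 \le \tfrac{10k^2}{n} + \tfrac{21}{4}$. Expanding, $(a+\tfrac12)^2 = \tfrac{u_0^2}{n} + \tfrac{u_0}{\sqrt n} + \tfrac14$; using $u_0 \le k+1 \le n+1$ gives $\tfrac{u_0^2}{n} \le \tfrac{k^2}{n} + 3$, and a short case split on whether $k \ge \sqrt n$ gives $\tfrac{u_0}{\sqrt n} \le \tfrac{9k^2}{n} + 2$. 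Adding these yields the claimed bound, hence $\frac{1}{2^n}\sum_{i=i_0}^n\binom ni \ge \frac{e^{-21/4}}{8\sqrt2}\,e^{-10k^2/n} \ge \frac{1}{10000}\,e^{-10k^2/n}$, since $\frac{e^{-21/4}}{8\sqrt2} \approx 4.6\times10^{-4} > 10^{-4}$.

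\emph{Remaining case: $a+\tfrac12 > \sqrt n$, i.e. $k > n - \tfrac{\sqrt n}{2} - 1$.} Here I would discard all terms except $\binom nn$: $\frac{1}{2^n}\sum_{i=i_0}^n\binom ni \ge 2^{-n}$, and check $2^{-n} \ge \tfrac{1}{10000}e^{-10k^2/n}$, equivalently $\tfrac{10k^2}{n} \ge n\ln 2 - \ln 10000$. This holds because $k^2 \ge (n-\tfrac{\sqrt n}{2}-1)^2 \ge n^2 - n^{3/2} - 2n$, so the left-hand side is at least $10n - 10\sqrt n - 20$, which exceeds $n\ln2 - \ln10000$ once $n$ is large. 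Combining the two cases completes the proof.

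The argument is entirely elementary; the one point requiring attention is that the target constants ($10$ in the exponent, $\tfrac1{10000}$ in front) leave little slack, so one must compare the sum to an honest integral rather than bounding every summand by the smallest one, and use an interval of length $\tfrac12$ (not $1$) in the $e^{-(a+1/2)^2}$ step — that is exactly what keeps the final numerical check on the correct side of $10^{-4}$. Everything else is routine bookkeeping.
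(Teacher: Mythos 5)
Your proof is correct, and it rests on the same two pillars as the paper's: the entropy lower bound on $\binom{n}{k}$ (in the form $\tfrac{1}{2^n}\binom{n}{i}\ge\tfrac{1}{\sqrt{8n}}e^{-(2i-n)^2/n}$) and a case split on whether $k$ is within $O(\sqrt n)$ of $n$, with the extreme case handled by keeping only the term $\binom{n}{n}$. The one place you diverge is the main case: you compare the sum to a Gaussian integral and integrate over a window of length $\tfrac12$ at the left endpoint, whereas the paper keeps the first $\sqrt n$ summands and bounds each from below by the smallest of them, $\binom{n}{\frac{n+k}{2}+\sqrt n}$; the factor $\sqrt n$ cancels the $1/\sqrt{8n}$, and the cross term $4k\sqrt n$ is absorbed by AM--GM into $O(k^2+n)$, yielding $\tfrac{1}{e^5\sqrt 8}e^{-5k^2/n}$. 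Your closing remark that the constants force an ``honest integral'' is therefore not accurate: the cruder block-of-$\sqrt n$-terms argument lands comfortably inside the stated constants (indeed with exponent $5k^2/n$ rather than $10k^2/n$). Your integral version is tighter and equally valid; it just is not needed here.
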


\begin{proof}
Note that here, $\delta = \tfrac kn$.
If $k\geq n-2\sqrt n$, then \[-10k^2/n\leq -10n+40\sqrt n - 400\leq -9n\] for $n$ sufficiently large. Note that the LHS of~\ref{eq:yeah} is at least $2^{-n} > e^{-9n}$, and thus is at least the RHS.

 On the other hand, if $k \le n -2 \sqrt{n}$, then by Lemma~\ref{lem:entropy}, the sum of the first $\sqrt n$ terms is at least
  \begin{align*}
    \sqrt{n} \binom{n}{\frac{n+k}{2} + \sqrt{n}} &\ge \sqrt{n} \frac{1}{\sqrt{8n}} \exp\left(-n \cdot \left(\frac{k + 2\sqrt{n}}{n}\right)^2\right)\\
    &=\tfrac 1{\sqrt 8} \exp\left(-\frac{k^2 + 4k\sqrt{n} + 4n}{n}\right)\\
    &\ge\tfrac 1{\sqrt 8} \exp\left(-\frac{5k^2 + 5n}{n}\right)\\
    &= \frac{1}{e^5\sqrt{8}}\exp\left(-\frac{5k^2}{n}\right),
  \end{align*}
  which implies the claim.
\end{proof}

We recall here the statement of Lemma~\ref{lem:rad-anti}, and give a proof with the above results:\\[6pt]

\noindent\textbf{Lemma~\ref{lem:rad-anti}.} {\itshape{Assume that $X$ is the uniform distribution on $\{-1, 1\}$ (i.e., the Rademacher distribution).  There exists a universal constant $c > 0$ with the following property: For all $\bx \in \mathbb R^n$ such that $\|\bx\|_1 = 1$, with probability at least $n^{-0.001}$ over $\bv \sim X^n$ 
  \[
    \langle \bv, \bx\rangle \ge c \beta(\bx).
  \]
 }}

\begin{proof}
Recall, we have defined the following:
let $L = \lceil \log_2 n / 2^{100}\rceil$. Let $D = 2^{2^{500}}$ and let $E_1, \hdots, E_{L}$ be intervals such that $E_i = (D^{-i}, D^{-(i-1)}]$ for all $i < L$ and $E_L = [0, D^{-(L-1)}]$. Let $\bx = \bx^{(1)} + \cdots + \bx^{L}$ such that
  $\bx_j^{(i)} = \bx_j\cdot{\mathbf 1}[\bx_j\in E_i]$.
We say $\bx^{(i)}$ is \emph{sparse} if it has at most $L$ nonzero coordinates; otherwise it is \emph{dense}. Let $F \subset \{1, 2, \hdots, L\}$ be the set of dense indices. Let $\bx_{\spa}$ be the sum of the sparse $\bx^{(i)}$'s and $\bx_{\den}$ be the sum of the dense ones, and define 
  \[
    \beta(\bx) = \sqrt{\log n}\|\bx_{\den}\|_2 + \|\bx_{\spa}\|_1.
  \]

  Note that if we drop $\bx^{(L)}$, $\beta$ changes by at most $\sqrt{\log n}\Vert \bx^{(L)}\Vert_1 \leq n\sqrt{\log n} \cdot n^{-2^{400}+1}$, a negligeably small term.
   Thus, we can without loss of generality assume that $\bx^{(L)} = 0$.

   Since $\beta(\bx) = \|\bx_{\spa}\|_1 + \sqrt{\log n}\|\bx_{\den}\|_2$, we have for any $\bx$, at least one of $\|\bx_{\spa}\|_1$ or $\sqrt{\log n}\|\bx_{\den}\|_2$ is at least $\frac{1}{2}\beta(\bx)$.
   Assume we know that with probability at least $2n^{-0.001}$, $\langle \bv, \bx_{\spa}\rangle = \Omega(\|\bx_{\spa}\|_1)$; and with probability at least $2n^{-0.001}$, $\langle \bv, \bx_{\den}\rangle = \Omega(\sqrt{\log n}\|\bx_{\den}\|_2)$. Then, we know with probability at least $n^{-0.001}$, one of $\langle \bv, \bx_{\spa}\rangle$ and $\langle \bv, \bx_{\den}\rangle$ is at least $\Omega(\beta(\bx))$ and the other is at least $0$ and thus their sum is at least $\beta(\bx)$. We split the remainder of the proof into two parts.

   \subsubsection*{Part 1, $\langle \bv, \bx_{\spa}\rangle = \Omega(\|\bx_{\spa}\|_1)$}

   Let $\bx'$ be the $2L$ largest coordinates of $\bx_{\spa}$. Note that $\|\bx'\|_1$ is at least $D^2$ times the sum of the next $2L$ largest coordinates of $\bx_{\spa}$ and at least $D^4$ times the sum of the next $2L$ largest coordinates after that, etc. Thus, $\|\bx'\|_1 \ge \frac{1}{2}\|\bx_{\spa}\|$.
   
   Now with probability $1/2^{2L}$, because $\bv$ has i.i.d.~Rachemacher entries, $\langle \bv, \bx'\rangle = \|\bx'\|_1$, and with probability at least $1/2$, $\langle \bv, \bx_{\spa} - \bx'\rangle \ge 0$. Thus, with probability at least $1/2^{2L+1} \ge 2n^{-0.001}$, $\langle \bv, \bx_{\spa}\rangle \ge \frac{1}{2}\|\bx_{\spa}\|_1$. 

  \subsubsection*{Part 2, $\langle \bv, \bx_{\den}\rangle = \Omega(\sqrt{\log n}\|\bx_{\den}\|_2)$} 

  Since $\bx_{\den}  = \sum_{i \in F}\bx^{(i)}$, we have that
  \begin{align}
\Pr\left[\langle \bv, \bx_{\den}\rangle \ge \frac{\sqrt{\log n}}{1000D} \|\bx_{\den}\|_2\right] \ge \prod_{i \in F}\Pr\left[\langle \bv, \bx^{(i)} \rangle \ge \frac{\sqrt{\log n} \cdot \|\bx^{(i)}\|_2^2}{1000D\|\bx_{\den}\|_2}\right]\label{eq:done}
  \end{align}
  
  Consider $i \in F$, and let $m_i \ge L + 1$ be the support size of $\bx^{(i)}$. Since $\bx^{(i)}_j D^i \ge 1$ for all $j$ in the support of $\bx^{(i)}$, we have by Theorem~\ref{thm:erdos-lo} and Claim~\ref{claim:binom-sum}, that for any integer $k \in [0, m_i]$ 
  \[
    \Pr\left[\langle \bv, \bx^{(i)} \rangle \ge \frac{k}{D^i}\right] \ge \sum_{i = \frac{m_i+k}{2}+1}^{m_i} \binom{m_i}{i} \ge \frac{1}{10000}\exp\left(-10m_i\left(\frac{k+2}{m_i}\right)^2\right).
  \]
  Observe that $\|\bx^{(i)}\|_2 \le \sqrt{m_i}\|\bx^{(i)}\|_{\infty} \le \sqrt{m_i} D^{-(i-1)}$. Thus,
\[
\Pr\left[\langle \bv, \bx^{(i)} \rangle \ge \frac{k}{D\sqrt{m_i}}\|\bx^{(i)}\|_2\right] \ge \frac{1}{10000}\exp\left(-10m_i\left(\frac{k+2}{m_i}\right)^2\right).
\]

Let \[
  k = \left\lceil \frac{1}{1000}\sqrt{m_i \log n} \cdot \frac{\|\bx^{(i)}\|_2}{\|\bx_{\den}\|_2}\right\rceil.
\]
Then, note that
\[
\frac{k+2}{m_i} \le \frac{3}{m_i} + \frac{1}{1000}\sqrt{\frac{\log n}{m_i}} \cdot \frac{\|\bx^{(i)}\|_2}{\|\bx_{\den}\|_2}
\]
Thus, since $(a+b)^2 \le 2a^2 + 2b^2$,
\[
-10m_i\cdot\left(\frac{k+2}{m_i}\right)^2 \ge -\frac{180}{m_i} - \frac{\log n}{5\cdot 10^4}\cdot \frac{\|\bx^{(i)}\|_2^2}{\|\bx_{\den}\|_2^2}.
\]
Therefore,
\[
\Pr\left[\langle \bv, \bx^{(i)} \rangle \ge \frac{\sqrt{\log n}}{1000D} \cdot \frac{\|\bx^{(i)}\|_2^2}{\|\bx_{\den}\|_2}\right] \ge \frac{1}{10^4}\exp\left(-\frac{180}{m_i}-\frac{\log n}{5\cdot 10^4}\cdot \frac{\|\bx^{(i)}\|_2^2}{\|\bx_{\den}\|_2^2}\right).
\]
Applying Eq.~\ref{eq:done}, and noting that each $m_i \ge L \ge |F|$. 

\begin{align*}
  \Pr\left[\langle \bv, \bx_{\den}\rangle \ge \frac{\sqrt{\log n}}{1000D} \|\bx_{\den}\|_2\right] &\ge  \frac{1}{10^{4L}}\exp\left(-\sum_{i\in F}\frac{180}{m_i} - \frac{\log n}{5\cdot 10^4}\right)\\
                                                                                                  &= \frac{1}{10^{4L}e^{180}}n^{-10^{-5}}\\
  &\ge n^{-2^{-90}} n^{-10^{-5}}\\
  &\ge n^{-0.001},
\end{align*}
For $n$ sufficiently large. This concludes the proof.
\end{proof}

\end{document}